\newcommand{\V}{\ensuremath{V}\xspace}
\newcommand{\PM}{\ensuremath{P}\xspace}
\newcommand{\p}[2][]{\ensuremath{p_{#1,#2}}\xspace}
\newcommand{\de}[1][]{\ensuremath{d_{#1}}\xspace}
\newcommand{\cost}[1][]{\ensuremath{c_{#1}}\xspace}
\newcommand{\pay}[1][]{\ensuremath{\paycoef[#1] \maxpay[#1]}\xspace}
\newcommand{\paycoef}[1][]{\ensuremath{\lambda_{#1}}\xspace}
\newcommand{\maxpay}[1][]{\ensuremath{\pi_{#1}}\xspace}
\newcommand{\maxpayv}{\ensuremath{\mathbf{\pi}}\xspace}
\newcommand{\xv}{\ensuremath{\mathbf{x}}\xspace}
\newcommand{\INCLPROB}[1][]{\ensuremath{\ifthenelse{\equal{#1}{}}{q}{q_{#1}}}\xspace}
\newcommand{\InclProb}[2]{\ensuremath{q_{#1}(#2)}\xspace}
\newcommand{\LOCALACT}[1]{\ensuremath{f^{#1}}\xspace}
\newcommand{\localact}[2]{\ensuremath{f^{#1}(#2)}\xspace}
\newcommand{\localc}[2]{\ensuremath{\hat{f}^{#1}(#2)}\xspace}
\newcommand{\der}[3]{\ensuremath{#1_{#2}(#3)}\xspace}
\newcommand{\DISTR}{\ensuremath{D}\xspace}
\newcommand{\EdgeProb}[2]{\ensuremath{q_{#1}(#2)}\xspace}
\newcommand{\influv}[1][]{\ensuremath{\mathbf{f}_{#1}}\xspace}
\newcommand{\exav}[1][]{\ensuremath{\mathbf{q}_{#1}}\xspace}
\newcommand{\IniProb}[1]{\ensuremath{p^0_{#1}}\xspace}
\newcommand{\Vp}{\ensuremath{V^{+}}\xspace}
\newcommand{\Ut}[1][]{\ensuremath{U({#1})}\xspace}
\newcommand{\ExpAct}[1][]{\ensuremath{\ifthenelse{\equal{#1}{}}{f}{f({#1})}}\xspace}
\newcommand{\SocWel}[1][]{\ensuremath{\ifthenelse{\equal{#1}{}}{W}{W({#1})}}\xspace}
\newcommand{\esw}[1][]{\ensuremath{%
\ifthenelse{\equal{#1}{}}{\overline{W}}{\overline{W}({#1})}}\xspace}
\newcommand{\TOTALINF}{\ensuremath{\phi}\xspace}
\newcommand{\totalinf}[2][]{\ensuremath{%
\ifthenelse{\equal{#1}{}}{\TOTALINF(#2)}{\TOTALINF_{#1}({#2})}}\xspace}
\newcommand{\pr}[1][]{\ensuremath{\mbox{Pr}[{#1}]}\xspace}
\newcommand{\SVEC}[1][]{\ensuremath{\ifthenelse{\equal{#1}{}}{\mathbf{y}}{\mathbf{y}^{(#1)}}}\xspace}
\newcommand{\SVECD}[1]{\ensuremath{Y_{#1}}\xspace}
\newcommand{\svec}[2][]{\ensuremath{%
\ifthenelse{\equal{#1}{}}{y_{#2}}{y^{(#1)}_{#2}}}\xspace}
\newcommand{\uvec}{\ensuremath{\mathbf{\hat{u}}}\xspace}
\newcommand{\demand}{\emph{Demand}\xspace}
\newcommand{\onehop}{\emph{One-Hop}\xspace}
\newcommand{\nonet}{\emph{No-Network}\xspace}
\newcommand{\dummy}{Empty\xspace}
\newcommand{\latency}[2][]{\ensuremath{\Delta_{{#1}\rightarrow{#2}}}\xspace}
\newcommand{\threshold}{\ensuremath{\Gamma}\xspace}
\newcommand{\degree}[1]{\ensuremath{d_{#1}}\xspace}
\newcommand{\harvard}[1][]{Harvard\ensuremath{_{\threshold=#1\mathrm{ms}}}\xspace}
\renewcommand{\GenEquation}[4][]{%
\begin{eqnarray*}
#2 & #3 & #4
\end{eqnarray*}}
\newcommand{\dkcomment}[1]{{\color{green}\hspace{0ex} #1}}
\begin{document}

\title{You Share, I Share: Network Effects and Economic Incentives in
  P2P File-Sharing Systems}

\date{}

\author{%
Mahyar Salek\thanks{Department of Computer Science, University of
  Southern California, CA 90089-0781, USA. E-mail: salek@usc.edu}
\and Shahin Shayandeh\thanks{
Microsoft Corporation, One Microsoft Way, Redmond, WA, 98052-6399, USA.
E-mail: shahins@microsoft.com.
Work done while the author was at the University of Southern
California.}
\and David Kempe\thanks{Department of Computer Science, University of
  Southern California, CA 90089-0781, USA. E-mail: \dkempeemailalternate. \dkempethanks}
}

\maketitle

\begin{abstract}

We study the interaction between network effects and
external incentives on file sharing behavior in Peer-to-Peer (P2P)
networks. Many current or envisioned P2P networks reward individuals
for sharing files, via financial incentives or social recognition.
Peers weigh this reward against the cost of sharing incurred when others download the shared file. As a result, if
other nearby nodes share files as well, the cost to an individual
node decreases. Such positive network sharing effects can be expected to
increase the rate of peers who share files.

In this paper, we formulate a natural model for the network effects of
sharing behavior, which we term the ``demand model.''  
We prove that the model has desirable diminishing returns properties, 
meaning that the network benefit of increasing payments decreases
when the payments are already high. 
This result holds quite generally, for submodular
objective functions on the part of the network operator.

In fact, we show a stronger result: the demand model leads to a
``coverage process,'' meaning that there is a distribution over graphs
such that reachability under this distribution exactly captures the
joint distribution of nodes which end up sharing. 
The existence of such distributions has advantages in simulating and
estimating the performance of the system. 
We establish this result via a general theorem characterizing which types of
models lead to coverage processes, and also show that all coverage
processes possess the desirable submodular properties.
We complement our theoretical results with experiments on
several real-world P2P topologies. We compare our model quantitatively
against more na\"{\i}ve models ignoring network effects. 
A main outcome of the experiments is that a good incentive scheme
should make the reward dependent on a node's degree in the network.

\end{abstract}

\section{Introduction} \label{sec:introduction}
Peer-to-Peer (P2P) file sharing systems have become an important
platform for the dissemination of files, music, and other content.
The basic idea is very simple: individuals make files available for
download from their own machine. Other users can search for files they
desire and download them from a peer who has made the file available.
Naturally, designing systems such that the search and download of
files are efficient poses many research challenges, which have
received a lot of attention in the literature
\cite{aperjis:freedman:johari:distribution,saroiu:gummadi:gribble:measurement}.

A second, and somewhat orthogonal, issue is how to ensure sufficient
participation and sharing of files. Unless enough content is provided
by individuals, the utility of membership will be very small.
If free-riding \cite{feldman:papadimitriou:chuang:stoica} is too
prevalent, the system may exhibit a quick decrease in membership
common to public-goods type economic settings \cite{schelling}.

Thus, the P2P system must be designed with incentives in mind to
encourage file sharing. These incentives can take the form of
monetary payments or redeemable ``points'' \cite{golle:leyton-brown:mironov:lillibridge},
download privileges, or simply recognition. From the system designer's
perspective, these payments should be ``small,'' while ensuring enough
participation.

On the other hand, from a peer's perspective, the payments need to be
weighed against the cost incurred by sharing a file.
In this paper, we assume that the content is shared legally and the
system is designed with security in mind: hence, the main cost to an
individual is the upload bandwidth which will be used whenever another
peer downloads a file from this node.

Nodes will in general choose to download from nearby peers (in
terms of bandwidth or latency).
Therefore, as additional nearby peers share the same files,
the load will get distributed among more nodes, and the cost to each
individual node will decrease.
Thus, not only will we expect cascading effects of sharing based on
social dynamics \cite{granovetter:threshold-models}, but
we would also expect these cascading effects to be based on a network
structure determined by point-to-point latencies and bandwidths.

Our contribution in this paper is the definition and analysis (both
theoretical and experimental) of a natural model for peers' sharing
behavior in P2P systems, in the presence of network
effects and economic incentives.
In our model, we focus only on sharing one file; in practice, the
model can be applied separately for each file of interest.
The basic premise of the model is that each node has a certain
\todef{demand} for the file. Furthermore, the network determines which
percentage of the demand will be met by downloading from each
peer sharing the file\footnote{In practice, we could expect these percentages
to correlate strongly with network latency or available bandwidth, but
our model is agnostic about the derivation.}.
The crucial implication of this model is that the more nearby peers
are sharing a file, the more evenly the demand will be distributed
among them.

The upload bandwidth cost is compensated by a \todef{payment} to the peers
who make the file available. Again, our model is agnostic about
whether these payments are monetary, recognition, or take other forms.
In our model, the payments can be explicitly based on the network
degree of peers, since high-degree nodes presumably serve a key role
in propagating sharing behavior.

We argue that this model captures the essential dynamics of
P2P systems in which a peer can join the network and download files
without sharing; hence, availability of files is not the only
incentive for sharing. The FastTrack P2P protocol, used by KaZaA,
Grokster, and iMesh, is an example where this assumption holds; hence,
our model should be a reasonable approximation for these services in
terms of its incentives.

The network operator is interested in maximizing a social welfare
function \SocWel, which grows monotonically as a function of the set
of nodes that share the file. This function could be the total number of
sharing nodes, the number of nodes with at least one uploading
neighbor, or the total download bandwidth available to peers under
various natural models of downloading.

After defining this model formally (in Section \ref{sec:preliminaries}),
we prove strong and general \emph{diminishing returns} properties
about it (in Section \ref{sec:model-analysis}).
In particular, we show that whenever \SocWel is monotone
and submodular, the network's social welfare as a function of the
payments offered to the peers is monotone i.e., increasing payments will always increase social
welfare. However the \emph{rate} of increase decreases when payments are already
high. We call the latter property diminishing returns. 

To prove this result, we consider a slightly different
model, wherein payments are combined with giving the network operator
the ability to ``force'' some set $S$ of peers to share.
By first proving certain local submodularity properties for this modified
model, the desired diminishing returns properties are implied by the general
result of Mossel and Roch \cite{mossel:roch:submodular}.
However, we derive a similar result to \cite{mossel:roch:submodular}
for a broad subclass of submodular functions which we call coverage
functions. It consists of the functions for which in the underlying
process, the distribution of nodes sharing the file is equivalent to the
distribution of nodes reachable from $S$ in an appropriately defined
random graph model. We establish this equivalence via a general and
non-trivial theorem characterizing all functions that can be
obtained by counting reachable nodes under random graph models.
As a corollary, our approach provides a much simpler proof of the main
result from \cite{mossel:roch:submodular} for coverage processes.
Moreover, the fact that the propagation of sharing behavior is a
coverage process is useful for the purpose of simulating the process
and estimating the parameters of the system, allowing more efficient
algorithms for simulations. Finally, our characterization can be of
independent interest in the study  of submodular set functions.

%

While the bulk of our paper focuses on a theoretical analysis of the
demand model, we complement the theoretical results by  an
experimental evaluation of our model (in Section
\ref{sec:experiments}),
using two network topologies derived from real-world data sets
\cite{gummadi:saroiu:gribble:king,mitking,havardking}, and a regular
two-dimensional grid topology.
We first show that network effects are significant by comparing our
demand model with one in which peers are not aware of changes in load
due to nearby sharing peers.
We then evaluate different payment schemes, in particular regarding
their dependence on nodes' degrees. We evaluate these both in terms of
the fraction of peers that end up sharing, and the amount paid by the
network operator per sharing node.

\subsection{Related Work}
There is a large body of work on incentive mechanisms in P2P
file-sharing systems. (See \cite{feldman:chuang:overcoming} for a thorough
overview and \cite{zhao:lui:chiu} for a recent generalized analysis
framework.) Incentive mechanisms can be classified in three categories:
barter-based mechanisms, reputation-based mechanisms, and
currency-based mechanisms.

\todef{Barter-based} methods \cite{anagnostakis:greenwald:exchange}
enforce repeated transactions among peers by matching each
peer to only a small subset of the network, hence raising the survival
chance for strategies based on reciprocation. This method only works when we
have a small and popular set of files. For instance, the BitTorrent
protocol~\cite{cohen:bittorrent} is a popular P2P file-sharing protocol using
this method.

\todef{Reputation-based mechanisms} have an excellent
track record at facilitating cooperation in very diverse settings, from
evolutionary biology to marketplaces like eBay. These systems keep a
tally of the contribution of each peer; the past contributions
determine which peers obtain more of the system's resources in the
future. However, the availability of cheap
pseudonyms in P2P systems makes reputation systems vulnerable to Sybil
and whitewashing attacks~\cite{feldman:papadimitriou:chuang:stoica}, leading
to ongoing work on designing sybilproof reputation
mechanisms~\cite{cheng:friedman:sybilproof}. Moreover, reputation systems may
be vulnerable to coordinated gaming strategies due to distributed
rating systems \cite{shneidman:parkes:redundancy}. 

Inspired by markets, a P2P system can also deploy a currency scheme to
facilitate resource contributions by rational peers. Generally, peers
earn currency by contributing resources to the system, and spend
the currency to obtain resources from the system. Karma~\cite{vishnumurthy:chandrakumar:sirer:karma}
is one example of this kind. \todef{Currency-based systems} may also suffer
from Sybil and whitewashing attacks, depending on their policies toward
newcomers. If newcomers are endowed with a positive balance, then the
system is vulnerable to these attacks; otherwise, there might not be enough
incentive for newcomers to join the network.
Balance control could also be troublesome, as the system might need to
deal with negative balances.

Lai et al.~\cite{lai:feldman:stoica:chuang} introduced the concept of
``private'' history vs.~``shared'' history as a way to combine
barter-based and reputation-based mechanisms in the context of an
evolutionary prisoner's dilemma. Shared history is a pool that records
peers' past behavior and services them according to their reputation.
In \cite{feldman:papadimitriou:chuang:stoica}, file sharing is modeled
as a social phenomenon, akin to those discussed by Schelling
\cite{schelling}. Users consider whether or not to contribute
files based on the number of other users who contribute.
Our model is different in that it explicitly models the costs incurred
by contributing nodes, rather than simply positing an intrinsic
generosity parameter for each user.

\section{Models and Preliminaries} \label{sec:preliminaries}
We consider a peer-to-peer network with $n$ servers (or \todef{nodes}
or \todef{peers}), and
focus on the behavior of sharing one particular file.
Thus, each peer $v$ may either choose to share the file or to not
share it. We also call sharing peers
\todef{active}, and the other ones \todef{inactive}.
The set of all peers who share is denoted by \Vp.

\subsection{The Demand Model}
Each peer has a local \todef{demand} \de[v] for the file: this
demand will originate from individual users on the server $v$ (who
themselves might not possess the file or be in a position to make it
available). The demand \de[v] should be served by downloading the
file from other servers $u \in \Vp$. The \todef{quality} of the
connection between $v$ and $u$ is captured by a matrix \PM: the larger
\p[v]{u}, the larger a fraction of $v$'s demand will be served by
$u$ (assuming that $u$ shares the file). Specifically, the demand that
$u \in \Vp$ will see from $v$ is
$\de[v] \cdot \frac{\p[v]{u}}{\sum_{w \in \Vp} \p[v]{w}}$.
The matrix \PM will in practice depend on network latencies or
bandwidth, as well as explicit download agreements. It need not be
symmetric. For the purpose of the general model, we are agnostic to
the derivation of \PM; in Section \ref{sec:experiments}, we will
derive \PM from measured network latencies by positing a latency
threshold which individuals are willing to tolerate.

A node $u \in \Vp$ sharing the file will incur a \todef{cost} of
\cost[u] per unit of demand that it serves; this cost is the result of
using upload bandwidth, machine processing time, or similar resources.
To encourage peers to share the file despite this cost, the P2P
network administrator offers payments \maxpay[u] to the nodes
$u \in \Vp$. These payments need not be the same for all nodes, and
can be derived from the network structure, e.g., a node's degree.

Different nodes may have different (and unknown) tradeoffs between
money and upload bandwidth. We model this fact by assuming that each
node $u$ has a tradeoff factor \paycoef[u], drawn independently and
uniformly at random from $[0, 1]$, which captures how many units of
bandwidth one unit of money is worth to the node.
Thus, the \todef{sharing utility} of an active node $u \in \Vp$ is
\Equation{\Ut[u]}{%
\pay[u] - \cost[u] \sum_{v} \frac{\de[v] \p[v]{u}}{%
   \sum_{w \in \Vp} \p[v]{w}},}
while the sharing utility of non-sharing nodes is 0.
(A non-sharing node does not get paid and incurs no upload costs.)
We assume that agents are rational, and thus choose whether to share
or not to share so as to maximize their own utility.

\subsection{Other Models}
As we discussed in Section \ref{sec:introduction}, one of our main
contributions is the observation that file sharing behavior should be
subject to positive network externalities, i.e., that the presence of
other sharing peers makes sharing less costly. To quantify the size
of such network effects, we define two alternative models with no or
limited effects; we will compare these two models experimentally with
the demand model in Section \ref{sec:experiments}.

\begin{enumerate}
\item In the {\nonet Model}, the peers completely ignore other sharing
  peers. Thus, a node $u$ assumes that if it shares the file, then it
  will see a fraction \p[v]{u} of the demand originating with node
  $u$. Hence, the perceived utility of node $u$ when sharing is
\Equation{\Ut[u]}{%
\pay[u]  - \cost[u] \cdot \sum_{v} \de[v] \p[v]{u}.}

\item In the {\onehop Model}, the peers are aware of network effects
  in a very limited way: node $u$ assumes that any node $v$ sharing the
  file will contribute toward serving both $v$'s and $u$'s demand,
  but not toward serving the demand of any other node $w
    \neq u,v$. Thus, the perceived utility of node $u \in \Vp$ is
  in the \onehop Model is
\Equation{\Ut[u]}{%
\pay[u] - \cost[u] \cdot \frac{\de[u] \p[u]{u}}{\sum_{w \in \Vp} \p[u]{w}}%
- \cost[u] \cdot \sum_{v \neq u} \frac{\de[v] \p[v]{u}}{%
   \sum_{w \in \Vp \cap \SET{u, v}} \p[v]{w}}.}
\end{enumerate}

\subsection{Payment Schemes, Sharing Process, and Administrator's Objective}
\label{sec:payment}
The network administrator's choice is how to set the payment offers
\maxpay[u]. In doing so, the administrator balances two competing goals:
low overall payments and high utility for the participants in the system.
In this paper, we study the impact of payment schemes on these objectives.

In order to provide enough incentives for sharing, the network
administrator should always ensure that
$\maxpay[u] \geq C_u := \cost[u] \cdot \sum_{v} \de[v]$.
Otherwise, even a node $u$ with $\paycoef[u] = 1$ (i.e., the highest
possible utility for money) would have no incentive to share the file
if no other peers are sharing the file.

The full model is thus as follows: after the administrator decides on
the payments \maxpay[u] for all nodes $u$, the random tradeoffs $\paycoef[u]$
between money and bandwidth are determined independently for all nodes $u$.
Subsequently, the process proceeds in iterations. 
In each iteration, all peers simultaneously decide whether to share
the file or not, based on the payments, costs, and previous decisions
of all other peers. The process continues until an equilibrium is
reached. Notice that because the cost to a peer is
monotone decreasing in the set \Vp of currently sharing peers, the set
of sharing peers can only become larger from iteration to iteration. 
In particular, this implies that the process will eventually terminate
with some set \Vp of active peers. 
We call this the \todef{sharing process} or \todef{activation process}.

The network administrator is in general interested in increasing
access to the file while keeping the payments low. This general
objective may be captured using various metrics.
In general, we allow for any overall social welfare function
\SocWel which increases monotonically in the set $S$ of sharing nodes.
Notice that since the set $S$ itself is the result of a random
process, the administrator's goal will be to maximize
$\Expect{\SocWel[S]}$, where $S$ is derived from the random activation
process in the demand model.
Several social welfare functions \SocWel suggest themselves naturally:
\begin{enumerate}
\item The number of active peers is a natural measure of
  participation. It is the measure frequently studied in the context
  of the diffusion of innovations or behaviors in social networks
\cite{goldenberg:libai:muller:talk,granovetter:threshold-models,InfluenceSpread,InfluenceSpreadICALP,morris:contagion,mossel:roch:submodular}.
While the objective is similar, the precise dynamics are different
between those models and the demand model.
\item The total number of \todef{serviced} nodes, i.e., nodes $v$ with
  at least one active node $u$ with $\p[u]{v} > 0$. This model is
  appropriate if we only care about how \emph{many} peers can download
  the file, but not about the quality of the connection. It implicitly
  assumes that each peer has a constant utility of 1 for downloading.
\item Each node $u$ gets a utility of $\sum_{v \in \Vp} \p[u]{v}$, and
  the social welfare is the sum of all these utilities.
  This model is based on the assumption that $u$'s demand is
  served by all of its neighbors (including possibly $u$)
  simultaneously, and that $u$'s utility is the total ``download
  bandwidth'' available in this sense.
  We call this the \emph{sum-welfare} function.
\item Each node $u$ gets a utility of $\max_{v \in \Vp} \p[u]{v}$, and
  the social welfare is the sum of all these utilities.
  This is based on the assumption that $u$'s demand is served by its
  active neighbor with the best connection, corresponding to a
  situation where parallel download from multiple sources is not
  possible.
  We call this the \emph{max-welfare} function.
\end{enumerate}

Notice that the social welfare function \SocWel may also include the
utilities of the sharing nodes.

\section{Theoretical Analysis of the Model} \label{sec:model-analysis}
The main analytical contribution of this paper is based on 
\todef{coverage processes}\footnote{We thank Bobby Kleinberg for this naming
  suggestion, and also note here that Theorem \ref{thm:coverage} was
  derived independently by him.}, defined formally in Definition
\ref{def:coverage}. Informally, a coverage process is a random process
such that the distribution over sets of ultimately active nodes is
also the distribution of reachable nodes under a suitably chosen
distribution of random graphs.
Our results on coverage processes are twofold:

(1) We give a general characterization of coverage processes, and show
that the activation process for P2P systems is a coverage process.
(2) We give a significantly simplified proof (compared to the general
result of \cite{mossel:roch:submodular}) showing that under coverage processes,
the expected social welfare as a function of the payments
has diminishing returns in the sense of Definition
  \ref{def:dimin}
so long as the social welfare is a submodular function of the active
nodes. 

Recall that a function $f$ defined on sets is
\todef{submodular}
if $f(S + v) - f(S) \geq f(T + v) - f(T)$ whenever $S \subseteq T$,
i.e., if the addition of an element to a larger set causes a smaller
increase in the function value than to a smaller set.
Thus, submodularity is the discrete analogue of concavity, and
intuitively corresponds to ``diminishing returns.''
An easy inductive proof (on the size of $X$) shows that submodularity
is equivalent to the condition that for all sets $X$,
\begin{equation}
f(S \cup X) - f(S) \; \geq \; (T \cup X) - f(T)
\qquad \mbox{ whenever } S \subseteq T. \label{eq:submod-equiv}
\end{equation}

\begin{definition} \label{def:dimin}
A function $g: \mathbb{R}^n \rightarrow \mathbb{R}$ has 
\todef{diminishing returns} if for every pair $i,j$ and all vectors
\xv, it satifises
\LEquation{\frac{\partial g(x_1, x_2, \dots, x_n)}{\partial x_i \partial x_j}}{0.}
\end{definition}

\begin{remark}
The notion of ``diminishing returns'' is strictly weaker than
concavity; it corresponds to concavity only along positive
coordinates axes.\footnote{We thank Shaddin Dughmi for pointing out
this ambiguity in an earlier version of the paper.} 
\end{remark}

The two main contributions of our paper together imply the following
theorem as a corollary:
\begin{theorem} \label{thm:demand-concave}
Let $\esw[{\maxpay[1], \ldots, \maxpay[n]}] = \Expect{\SocWel[S]}$
be the expected social welfare when set $S$ is obtained from the
sharing process of the demand model with payments
$\maxpay[1], \ldots, \maxpay[n]$.

If $\SocWel[S]$ is submodular, then
$\esw[{\maxpay[1], \ldots, \maxpay[n]}]$
is monotone and has diminishing returns with respect to the payments
$\maxpay[1], \ldots, \maxpay[n]$.
\end{theorem}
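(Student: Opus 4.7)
My plan is to treat the theorem exactly as the introduction frames it: as a corollary of (i) the structural result that the demand-model sharing process is a coverage process, and (ii) the fact that coverage processes with submodular per-realization welfare give rise to expected welfare with diminishing returns, plus a short chain-rule computation to handle the one remaining subtlety, namely translating a statement about per-node seed probabilities into one about the payments \maxpay[u].

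First I would invoke the coverage-process characterization to rewrite the random active set \Vp as the set of nodes reachable from a random seed set $T$ in a random graph $G \sim \mathcal{D}$, with the events $\{u \in T\}$ mutually independent. The coupling between \maxpay[u] and the seed probability $q_u(\maxpay[u]) := \Pr[u \in T]$ can be read off directly from the demand model: a node $u$ acts as an unconditional seed exactly when $\paycoef[u] \maxpay[u] \geq C_u$, so, because \paycoef[u] is uniform on $[0,1]$, one gets $q_u(\maxpay[u]) = \max(0, 1 - C_u/\maxpay[u])$. The observation to record here, verified by a two-line calculation, is that $q_u$ is non-decreasing and concave on the relevant range $[C_u, \infty)$.

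Next I would establish per-realization submodularity. For any fixed graph $G$, letting $R_G(T)$ denote the set reachable from seeds $T$, the identities $R_G(A \cup B) = R_G(A) \cup R_G(B)$ and $R_G(A \cap B) \subseteq R_G(A) \cap R_G(B)$, combined with monotonicity and submodularity of \SocWel, yield in one line that $f_G(T) := \SocWel[R_G(T)]$ is monotone and submodular in $T$. Writing $\esw[{\maxpay[1], \ldots, \maxpay[n]}] = \mathbb{E}_{G \sim \mathcal{D}}[F_G(q_1(\maxpay[1]), \ldots, q_n(\maxpay[n]))]$, where $F_G$ is the multilinear extension of $f_G$ evaluated at the independent inclusion probabilities, I would then invoke three standard properties of multilinear extensions of monotone submodular set functions: linearity in each coordinate (so $\partial^2 F_G/\partial q_i^2 = 0$), non-negativity of first partials, and non-positivity of cross partials. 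The chain rule closes the proof: for $i \neq j$ the mixed partial of \esw is a non-negative combination of $(\partial^2 F_G / \partial q_i \partial q_j) \cdot q_i'(\maxpay[i]) q_j'(\maxpay[j]) \leq 0$, and for $i = j$ the only surviving term is $(\partial F_G/\partial q_i) \cdot q_i''(\maxpay[i]) \leq 0$ by concavity of $q_i$. Monotonicity of \esw in each \maxpay[u] falls out of $q_u' \geq 0$ together with non-negativity of the first partials of $F_G$.

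The conceptual weight—and what I expect to be the main obstacle—sits in the first step, the coverage-process characterization, which the excerpt defers to an independent theorem. Once that structural result is in hand, the rest is essentially bookkeeping; the only genuinely needed piece of content beyond the coverage-process machinery is the concavity of $q_u$, which is precisely what lets the chain rule preserve the diminishing-returns property through the reparametrization from payments to seed probabilities.
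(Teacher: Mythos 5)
Your proposal is correct and follows essentially the same route as the paper: decompose the demand model into an independent random seed set with inclusion probability $\InclProb{u}{\maxpay[u]} = 1 - C_u/\maxpay[u]$ (increasing and concave) followed by the Seed Set Process, establish that the latter is a coverage process, prove per-realization submodularity of $T \mapsto \SocWel[{\totalinf[H]{T}}]$ via reachability in a fixed graph, and finish with the derivative/chain-rule computation on the independent-inclusion expectation (your multilinear-extension phrasing is just a repackaging of the paper's Lemma \ref{lem:submodular-concave}). The only substantive content you defer --- that the Seed Set Process is a coverage process --- is exactly what the paper's Theorems \ref{thm:coverage} and \ref{thm:p2p-coverage} supply, so no gap remains.
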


For the social welfare function, the diminishing returns property
intuitively means that the additional benefit in 
social welfare that can be derived from increasing the payment to a
peer $u$ decreases as the peers' current payments increase.

The proof of Theorem \ref{thm:demand-concave} is based on analyzing
the following \todef{Seed Set Model}, which we define mainly for the
purpose of analysis.
\begin{definition}[Seed Set Model]
For each node, the payment offered is $\maxpay[u] = C_u$.
Besides payments, we have a \todef{seed set} $S$ of peers
that will always share regardless of the payments.
Subsequently, the process unfolds exactly according to the sharing process.
\end{definition}

The main technical step is to show that the Seed Set Model is a
\todef{coverage process}, in the
following sense. 
\begin{definition}[Coverage Process] \label{def:coverage}
Let \totalinf{S} be the random variable describing the set of nodes
active at the end of a process starting from the set $S$ of nodes
active. The process is called a \todef{coverage process} if there
exists a distribution \DISTR over graphs $G$ such that
for each set $T$ of nodes, $\Prob{\totalinf{S} = T}$ equals 
the probability that exactly $T$ is reachable starting from $S$
in $G$ if $G$ is drawn from the distribution \DISTR.
\end{definition}

\begin{remark}

Without using our nomenclature, \cite{InfluenceSpread} showed
submodularity for the Cascade and Threshold models of innovation diffusion 
\cite{granovetter:threshold-models,goldenberg:libai:muller:talk} by
establishing that both gave rise to coverage processes.
Subsequently, \cite{InfluenceSpreadICALP} showed that there are
natural diffusion processes which are not coverage processes, yet
have a submodular function $\Expect{\SetCard{\totalinf{S}}}$.
\end{remark}

We prove that the Seed Set Model is a coverage process in two steps.
First, in Section \ref{sec:coverage-characterization}, we give a
general and complete characterization of Coverage Processes.
This characterization may be of interest in its own right, as coverage
processes have a practical advantage: they can be simulated easily and
efficiently, by first generating a random graph according to \DISTR,
and then simply finding the set of reachable nodes.

Then, in Section \ref{sec:seed-set-coverage}, we show that the Seed
Set Process satisfies the conditions established in Section
\ref{sec:coverage-characterization}.
Finally, in Section \ref{sec:concavity}, we give a simple proof that
for \emph{any} coverage process and any submodular social welfare
function, the expected social welfare under the process is also
submodular. This implies diminishing returns with respect
to the payments.

\begin{remark}
The fact that the tradeoffs \paycoef[u] between money and bandwidth
are uniformly random in $[0,1]$ is important to ensure the
submodularity and diminishing returns properties.
If the \paycoef[u] are not random but fixed, then the diminishing returns and
submodularity properties cease to hold. Furthermore, in the Seed Set
Model, the optimization problem of finding the best seed set $S$ of at
most $k$ nodes becomes very hard, as we show in the appendix.
\end{remark}

\subsection{Characterization of Coverage Processes}
\label{sec:coverage-characterization}
In this section, we characterize \emph{exactly} which random processes are
coverage processes. This theorem may be of interest in its own right,
when analyzing different processes.

Our setting is exactly as in the paper
by Mossel and Roch\cite{mossel:roch:submodular}:
each node $u$ has an activation function \LOCALACT{u}, which is monotone
non-decreasing and satisfies $\localact{u}{\emptyset} = 0$. Each node
independently chooses a threshold $\theta_u \in [0,1]$ uniformly at
random, and becomes active when $\localact{u}{S} \geq \theta_u$, where $S$ is
the previously active set of nodes.
This process is repeated until no more changes occur.

In order to express our results concisely, we use the following
discrete equivalent of a derivative
(see, e.g., \cite{vondrak:submodular-welfare}).
For a function $f$ defined on sets, we define inductively:
\begin{eqnarray*}
\der{f}{\emptyset}{S} & = & f(S)\\
\der{f}{R \cup \SET{v}}{S} & = & \der{f}{R}{S \cup \SET{v}} - \der{f}{R}{S}.
\end{eqnarray*}
It is not difficult to verify that this notion is well-defined, i.e.,
independent of \emph{which} element $v$ is chosen at which stage.

\begin{theorem} \label{thm:coverage}
The following conditions are necessary and sufficient for the process
to be a coverage process.
\begin{itemize}
\item For all sets $T$ of odd cardinality $\SetCard{T}$, as well as
  for $T = \emptyset$, and each node $u$, we have
  $\der{\LOCALACT{u}}{T}{\Compl{T}} \geq 0$.
\item For all sets $T$ of positive even cardinality $\SetCard{T}$, and each
  node $u$, we have $\der{\LOCALACT{u}}{T}{\Compl{T}} \leq 0$.
\item $\localact{u}{\emptyset} = 0$ for all $u$.
\end{itemize}
\end{theorem}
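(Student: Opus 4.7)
The plan is to reduce the characterization to a per-node representation: for each $u$, identify a candidate distribution $p_u$ on subsets of $V$ such that $\localact{u}{S} = \Prob{N_u \cap S \neq \emptyset}$ for $N_u \sim p_u$, and show that the three sign conditions amount exactly to $p_u$ being a valid distribution. Concretely, set $p_u(\emptyset) := 1 - \localact{u}{V}$ and $p_u(T) := (-1)^{\SetCard{T}+1} \der{\LOCALACT{u}}{T}{\Compl{T}}$ for nonempty $T$. Expanding the discrete derivative via its defining formula and substituting $Q = T \setminus R$ identifies $p_u$ as the M\"obius inverse of the cumulative $Q \mapsto 1 - \localact{u}{\Compl{Q}}$ that such a representation would require; the three sign conditions are therefore equivalent to $p_u(T) \geq 0$ for all $T$, and the same algebra yields $\sum_T p_u(T) = 1$ together with the \emph{cover identity} $\sum_{T: T \cap S \neq \emptyset} p_u(T) = \localact{u}{S}$.

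For sufficiency, I assume the sign conditions, so each $p_u$ is a bona fide distribution. I build the random graph $G$ by independently drawing $N_u \sim p_u$ for each $u$ and including a directed edge $(v,u)$ whenever $v \in N_u$. The cover identity implies that the probability $u$ is reached in one step from any active set $S$ under $G$ equals $\localact{u}{S}$, matching the one-step dynamics of the threshold process. A monotone coupling between each independent threshold $\theta_u$ and the sample $N_u$---proceeding layer by layer along the ascending chain of active sets, in the spirit of \cite{mossel:roch:submodular}---then lifts this one-step matching to an equality in distribution of the final active sets.

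For necessity, I assume the threshold process is already realized as a coverage process with some graph distribution $\DISTR$, and let $N_u$ denote the marginal in-neighborhood of $u$ under $\DISTR$. The claim reduces to $\Prob{N_u \cap S \neq \emptyset} = \localact{u}{S}$ for every $S$, since M\"obius inverting this gives $p_u(T) = \Prob{N_u = T} \geq 0$. Because the thresholds $\theta_u$ are independent across $u$, the probability that a set of nodes all \emph{fail} to activate from a seed $S$ factors as a product over those nodes; matching this factorization under the coverage process forces the events $\SET{N_u \cap A = \emptyset}$ to be effectively independent across $u$ for the relevant sets $A$, and pins down the per-node marginal at the base case $S = V \setminus \SET{u}$ (where only one activation step is possible and both models reduce to $\Prob{N_u \neq \emptyset} = \localact{u}{V \setminus \SET{u}}$). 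A short induction on $\SetCard{V \setminus S}$, in which the contribution of multi-step paths is subtracted off using the induction hypothesis and the independences just established, then extends the marginal identity to every $S$.

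The main obstacle is precisely this necessity direction: equality of final-set distributions alone does not directly pin down one-step in-neighbor marginals, because multi-step reachability in $G$ can inflate $u$'s overall activation probability beyond the contribution of $N_u$ itself. The induction must carefully peel off those multi-step contributions by comparing the distributions of $\totalinf{S}$ against $\totalinf{S \cup \SET{v}}$, and it is the per-node independence of the thresholds $\theta_u$ that makes the peeling analytically tractable. The sufficiency direction, by contrast, is essentially algebraic bookkeeping once the M\"obius inversion identities are in hand.
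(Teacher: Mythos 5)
Your overall architecture is the same as the paper's: reduce to a per-node ``cover identity'' $\localact{u}{S} = \sum_{T : T \cap S \neq \emptyset} p_u(T)$, invert it (your M\"obius inversion is exactly the paper's explicit inverse of the incidence matrix combined with its non-recursive formula for discrete derivatives, and your signs $p_u(T) = (-1)^{\SetCard{T}+1}\der{\LOCALACT{u}}{T}{\Compl{T}}$ match), and read off the sign conditions as nonnegativity of $p_u$. The sufficiency sketch (independent draws $N_u \sim p_u$ plus a layer-by-layer coupling matching the conditional activation probabilities $\bigl(\localact{u}{S'}-\localact{u}{S}\bigr)/\bigl(1-\localact{u}{S}\bigr)$) is the standard construction and is fine, if under-detailed.

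The genuine gap is in your necessity direction. You reduce it to the claim that for \emph{any} realizing distribution \DISTR, the marginal in-neighborhood of each $u$ satisfies $\Prob{N_u \cap S \neq \emptyset} = \localact{u}{S}$ for every $S$, and you propose an induction that ``peels off'' multi-step contributions, leaning on an asserted near-independence of the events $\SET{N_u \cap A = \emptyset}$ across $u$. That intermediate claim is false, so no induction can establish it. Concretely, take three nodes $a,b,c$ with $\localact{a}{\cdot} \equiv 0$ and $\localact{b}{S} = \localact{c}{S} = 1/2$ for every nonempty $S$ (not containing the node itself). The product construction uses $N_b = \SET{a,c}$ and $N_c = \SET{a,b}$ each with probability $1/2$; but the \emph{correlated} distribution with $N_a = \emptyset$ always and $(N_b,N_c)$ uniform over the four pairs $(\emptyset,\emptyset)$, $(\emptyset,\SET{a,b})$, $(\SET{c},\SET{a,b})$, $(\SET{a,c},\emptyset)$ also reproduces the exact distribution of final active sets for every seed (one checks all eight seeds directly), yet its marginal gives $\Prob{N_b \cap \SET{a} \neq \emptyset} = 1/4 \neq 1/2 = \localact{b}{\SET{a}}$ --- the shortfall is made up by the two-step path through $c$. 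So matching final-set distributions pins down the one-step marginal only at $S = V \setminus \SET{u}$ (and certain conditional versions of it), not for all $S$, and the claimed factorization/independence across $u$ is simply not forced. To repair necessity you must argue the existence of \emph{some} valid per-node distributions (equivalently, nonnegativity of the unique solution $B\influv[u]$) without identifying them with the marginals of the given \DISTR; note the paper itself disposes of this reduction in one sentence, so this is precisely the point where extra care, rather than the M\"obius algebra, is needed.
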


To prove this theorem, we begin with the following reasoning.
Focus on one node $u$, and its activation function \LOCALACT{u}.
If there were an equivalent graph distribution \DISTR,
then it would have to define a probability
\EdgeProb{u}{T} for the presence of edges from exactly the vertex set
$T$ to $u$. These probabilities need to satisfy the
following property: if a set $S$ of nodes is active, then the
probability of $u$ having \emph{at least} one incoming edge from $S$
must equal \localact{u}{S}.
Thus, a necessary and sufficient condition for being a coverage
function is that for each node $u$, there exists a distribution
\EdgeProb{u}{T} over sets $T$ such that
\begin{eqnarray}
\localact{u}{S} & = & \sum_{T: T \cap S \neq \emptyset} \EdgeProb{u}{T}.
\label{eqn:distribution}
\end{eqnarray}

We can express this requirement more compactly using matrix notation.
Let \influv[u] be the $(2^n - 1)$-dimensional vector consisting of all
entries of \localact{u}{S} for $S \neq \emptyset$. Similarly, let
\exav[u] be the $(2^n-1)$-dimensional vector of all \EdgeProb{u}{S}
for $S \neq \emptyset$.
Let $A$ be the $((2^n-1) \times (2^n-1))$-dimensional matrix indexed
by non-empty subsets such that $A_{S,T} = 1$ if and only if
$S \cap T \neq \emptyset$, and $A_{S,T} = 0$ otherwise.
($A$ is called an \todef{incidence matrix} \cite{brualdi:ryser}.)
Then, Equation \ref{eqn:distribution} can be rewritten as the
requirement that for each node $u$, there exists a distribution
\exav[u] such that $A \cdot \exav[u] = \influv[u]$.

For the analysis, we fix a canonical ordering of subsets. 
Specifically, if the current (sub-)universe
consists of $k$ nodes indexed $\SET{1, 2, \ldots, k}$, their canonical
ordering is defined recursively as first containing all subsets of
$\SET{1, 2, \ldots, k-1}$ in canonical order, then
the set $\SET{k}$, followed by the sets $T \cup \SET{k}$, where the
sets $T \subseteq \SET{1, 2, \ldots, k-1}$ appear in canonical order.

In order to find out when the distribution \exav[u] exists, we want to
solve the equation $A \cdot \exav[u] = \influv[u]$, or
$\influv[u] = \Inverse{A} \cdot \exav[u]$. While the inverses of some
incidence matrices have been studied before (see, e.g.,
\cite{bapat:moore-penrose}), we are not aware of any source explicitly
giving the inverse of the matrix $A$. Hence, we establish here:

\begin{lemma} \label{lem:inverse}
The inverse of $A$ is the matrix $B$ defined by
\CDFunction{b_{S,T}}{0}{\mbox{ if } S \cup T \neq \SET{1, \ldots, n}}{%
(-1)^{\SetCard{S \cap T} + 1}}
\end{lemma}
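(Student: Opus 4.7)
The plan is to verify $AB = I$ by a direct computation of the $(S,R)$-entry and checking it equals $\mathbb{1}[S=R]$. Expanding,
\[
(AB)_{S,R} \;=\; \sum_{T \neq \emptyset} A_{S,T}\, B_{T,R}
\;=\; \sum_{\substack{T:\, T \cap S \neq \emptyset \\ T \cup R = \{1,\ldots,n\}}} (-1)^{|T \cap R|+1}.
\]
The first simplification is to reparametrize $T$. The condition $T \cup R = \{1,\ldots,n\}$ forces $T \supseteq \Compl{R}$, so write $T = \Compl{R} \cup U$ uniquely for some $U \subseteq R$, and note $|T \cap R| = |U|$. The constraint $T \cap S \neq \emptyset$ becomes $(\Compl{R} \cap S) \cup (U \cap S) \neq \emptyset$. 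This naturally splits the analysis into two cases depending on whether $S \subseteq R$.

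In the case $S \not\subseteq R$, the set $\Compl{R} \cap S$ is already nonempty, so every $U \subseteq R$ contributes, and
\[
(AB)_{S,R} \;=\; -\sum_{U \subseteq R} (-1)^{|U|} \;=\; 0
\]
by the standard binomial identity, using that $R \neq \emptyset$. In the case $S \subseteq R$, the constraint reduces to $U \cap S \neq \emptyset$. Writing $U = W \sqcup V$ with $W = U \cap S \subseteq S$ and $V = U \cap (R \setminus S) \subseteq R \setminus S$, the sum factors:
\[
(AB)_{S,R} \;=\; -\Bigl(\sum_{\emptyset \neq W \subseteq S} (-1)^{|W|}\Bigr)\Bigl(\sum_{V \subseteq R \setminus S} (-1)^{|V|}\Bigr).
\]
The second factor vanishes unless $R \setminus S = \emptyset$, i.e.\ $R = S$; when $R = S$, the second factor equals $1$ and the first equals $-1$ (since $\sum_{W \subseteq S}(-1)^{|W|} = 0$ and we subtract the empty-set term), giving $(AB)_{S,S} = 1$.

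The main obstacle is essentially bookkeeping: choosing the right parametrization of $T$ so that the sum factors cleanly over two independent binomial sums. Once that parametrization is in place, the rest is routine inclusion–exclusion. I would also double-check that the statement of the lemma is consistent with $A$ being symmetric (it is), so that $AB = I$ suffices and we do not need to verify $BA = I$ separately; but for safety one can note the argument above is symmetric in $S$ and $R$ once the parametrization is swapped.
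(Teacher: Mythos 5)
Your proof is correct, and it takes a genuinely different route from the paper's. The paper fixes a canonical ordering of subsets under which $A$ and $B$ decompose into recursive $2^k$-block structures $A_{k+1}, B_{k+1}$ built from $A_k, B_k$, and then proves $A_k B_k = I_k$ by induction on $k$ with a case analysis over seven block positions, supported by an auxiliary row-sum identity ($\mathbf{1}\cdot B_k$ equals the last unit vector). You instead compute the $(S,R)$ entry of $AB$ directly: the support condition on $B$ forces $T \supseteq \{1,\ldots,n\}\setminus R$, the reparametrization $T = (\{1,\ldots,n\}\setminus R)\cup U$ with $U \subseteq R$ turns the entry into an alternating sum over $U$, and the split $U = W \sqcup V$ over $S$ and $R\setminus S$ factors it into two binomial sums, each of which vanishes unless $R = S$. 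This is shorter, self-contained, and makes the inclusion--exclusion structure of $B$ transparent; the one thing the paper's route buys in addition is the auxiliary identity $\mathbf{1}\cdot B = \hat{u}$, which is reused later (in the proof that the $q_u(S)$ sum to at most $1$) --- though that identity also drops out of your closed form by the same column-sum computation. Two small points to tidy: (i) the index $T$ ranges over \emph{non-empty} sets, but this is automatic in your parametrization (in the case $S\not\subseteq R$ one has $T \supseteq \{1,\ldots,n\}\setminus R \neq \emptyset$, and in the case $S \subseteq R$ the constraint $U\cap S\neq\emptyset$ forces $T\neq\emptyset$), so it is worth one sentence; (ii) the cleanest justification that $AB=I$ suffices is either that a one-sided inverse of a square matrix is two-sided, or that both $A$ and $B$ are symmetric so $BA = (AB)^{\mathsf{T}} = I$ --- the symmetry of $A$ alone is not quite enough as stated.
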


\begin{proof}
The key insight is that under the canonical ordering of sets defined
above, the matrices $A$ and $B$ can be defined recursively via
matrices $A_k$ and $B_k$.
Specifically, let $A_1 = 1$, and
\[ \begin{array}{lcl}
A_{k+1} & = &
\left( \begin{tabular}{c|c|c}

\multirow{4}{*}{$A_k$} & 0 &  \\

&$\vdots$& $A_k$\\

&0&\\
&0&\\ \hline
0 \dots 0 0 & 1 & 1 \dots 1 1 \\ \hline
\multirow{4}{*}{$A_k$} & 1 &  \\

&$\vdots$& $1$\\

&1&\\
&1&\\
\end{tabular} \right).
\end{array} \]

Similarly, let $B_1 = 1$, and
\[ \begin{array}{lcl}
B_{k+1} & = &
\left( \begin{tabular}{c|c|c}

\multirow{4}{*}{0} & 0 &  \\

&$\vdots$& $B_k$\\

&0&\\
&-1&\\ \hline
0 \dots 0 -1 & 0 & 0 \dots 0 1 \\ \hline
\multirow{4}{*}{$B_k$} & 0 &  \\

&$\vdots$& $-B_k$\\

&0&\\
&1&\\
\end{tabular} \right).
\end{array} \]
The fact that $A=A_n$ and $B=B_n$ can be observed directly from the
definition and the canonical ordering.

To prove the lemma, we can show by induction on $k$ that
$A_k \cdot B_k = I_k$ for all $k$, where $I_k$ is the $k \times k$
identity matrix. The base case $k=1$ is obvious.
For the inductive step to $k+1$, consider the $(i,j)$ entry 
$(A_{k+1} \cdot B_{k+1})_{i,j}$. 
We distinguish 7 different cases, based on the $(i,j)$ indices.
(We use $0$ to denote the $(2^{k}-1) \cdot (2^k-1)$ matrix of all
zeroes, $\mathbf{1}$ for the vector of all ones,
and $\uvec$ for the $(2^k-1)$-dimensional unit vector with 1 in its
last coordinate and 0 everywhere else.)
\begin{enumerate}
\item If $i, j < 2^k$, then the entry is 
$(A_k \cdot 0)_{i,j} + 0 + (A_k \cdot B_k)_{i,j} = (I_k)_{i,j}$ by induction hypothesis.
\item If $i > 2^k, j < 2^k$, then (writing $i'=i-2^k$), the entry is
$(A_k \cdot 0)_{i',j} - \uvec_{i'} + (\mathbf{1} \cdot B_k)_{i'} = 0$ using Lemma
\ref{lem:aux-identity}(a) below.
\item If $i < 2^k, j > 2^k$, then (writing $j'=j-2^k$), the entry is
$(A_k \cdot B_k)_{i,j'} + 0_{i,j'} - (A_k \cdot B_k)_{i,j'} = 0$.
\item If $i, j > 2^k$, then (writing $i'=i-2^k, j'=j-2^k$), the entry is
$(A_k \cdot B_k)_{i',j'} + \uvec_{j'} - ({\bf{1}} \cdot B_k)_{j'} = (I_k)_{i',j'}$, again using
Lemma \ref{lem:aux-identity}(a).
\item If $i=j=2^k$, a straightforward calculation shows that the entry is 1.
\item If $i=2^k, j < 2^k$, then the entry is 
$({\bf{1}} \cdot B_k)_j - \uvec_{j} = 0$ by Lemma \ref{lem:aux-identity}(a).
Similarly, for $i=2^k, j > 2^k$, writing $j'=j-2^k$, the entry is 
$\uvec_{j'} - ({\bf{1}} \cdot B_k)_{j'} = 0$ by Lemma \ref{lem:aux-identity}(a).
\item Finally, for $j=2^k, i < 2^k$, the entry is
$-(A_k \cdot \Transpose{\uvec})_i+(A_k \cdot \Transpose{\uvec})_i = 0$, whereas for
$j=2^k, i > 2^k$, writing $i'=i-2^k$, the entry is 
$-(A_k \cdot \Transpose{\uvec})_{i'}+1 = 0$ by Lemma \ref{lem:aux-identity}(b).
\end{enumerate}

This proves that $A_{k+1} \cdot B_{k+1} = I_{k+1}$.
\end{proof}

\begin{lemma} \label{lem:aux-identity}
Let $\mathbf{1}$ be the vector of all 1's, and $\uvec$ defined as in the
proof of Lemma \ref{lem:inverse}. Then,
(a) $\mathbf{1} \cdot B_k = \uvec$, and
(b) $A_k \cdot \Transpose{\uvec} = \mathbf{1}$.
\end{lemma}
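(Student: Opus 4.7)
The plan is to prove both identities simultaneously by induction on $k$, leveraging the recursive block definitions of $A_k$ and $B_k$ given in the proof of Lemma~\ref{lem:inverse}. The base case $k=1$ is immediate since $A_1 = B_1 = (1)$ and both $\mathbf{1}$, $\uvec$ reduce to the scalar $1$, so each identity reads $1=1$.

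For the inductive step, I would first read off, from the recursive tabular definitions, the block decompositions
\[ A_{k+1} \;=\; \begin{pmatrix} A_k & \Transpose{\mathbf{0}} & A_k \\ \mathbf{0} & 1 & \mathbf{1} \\ A_k & \Transpose{\mathbf{1}} & J \end{pmatrix}, \qquad B_{k+1} \;=\; \begin{pmatrix} \mathbf{0} & -\Transpose{\uvec} & B_k \\ -\uvec & 0 & \uvec \\ B_k & \Transpose{\uvec} & -B_k \end{pmatrix}, \]
where $J$ is the $(2^k-1)\times(2^k-1)$ all-ones matrix, the top-left $\mathbf{0}$ in $B_{k+1}$ is the zero matrix of the same size, and all other occurrences of $\mathbf{0}, \mathbf{1}, \uvec$ are taken in dimension $2^k-1$. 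Under the canonical ordering, the full set $\{1,\dots,k+1\}$ indexes the very last coordinate, which sits inside the bottom block of size $2^k-1$; consequently $\Transpose{\uvec_{k+1}}$ splits (as a column) into $\Transpose{\mathbf{0}}, 0, \Transpose{\uvec_k}$, while $\mathbf{1}_{k+1}$ splits (as a row) into $\mathbf{1}_k, 1, \mathbf{1}_k$.

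For part (b), I would compute $A_{k+1} \cdot \Transpose{\uvec_{k+1}}$ block by block: the top block equals $A_k \cdot \Transpose{\uvec_k} = \Transpose{\mathbf{1}_k}$ by the inductive hypothesis, the middle entry is $\mathbf{1} \cdot \Transpose{\uvec_k} = 1$, and the bottom block is $J \cdot \Transpose{\uvec_k} = \Transpose{\mathbf{1}_k}$ because $\Transpose{\uvec_k}$ carries a single unit entry. Stacking the three pieces yields $\Transpose{\mathbf{1}_{k+1}}$. For part (a), the row-by-block-column expansion of $\mathbf{1}_{k+1} \cdot B_{k+1}$ produces three contributions: the first block collapses to $-\uvec_k + \mathbf{1}_k \cdot B_k = -\uvec_k + \uvec_k = \mathbf{0}$ by the inductive hypothesis, the middle entry cancels as $-\mathbf{1}_k \cdot \Transpose{\uvec_k} + \mathbf{1}_k \cdot \Transpose{\uvec_k} = 0$, and the third block equals $\mathbf{1}_k \cdot B_k + \uvec_k - \mathbf{1}_k \cdot B_k = \uvec_k$, again by the inductive hypothesis. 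Assembled, the result is $(\mathbf{0}, 0, \uvec_k) = \uvec_{k+1}$.

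The underlying algebra is completely routine; the only genuine obstacle is bookkeeping, namely reading off the recursive block structure correctly from the tabular definitions and tracking which coordinate inside each $2^k-1$ block corresponds to the full set under the canonical ordering. Once those correspondences are pinned down, every block product collapses in a single step via the inductive hypothesis.
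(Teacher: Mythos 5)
Your proof is correct, and for part (a) it is essentially the paper's argument: an induction on $k$ that sums the columns of $B_{k+1}$ block by block using the recursive structure, with exactly the same cancellations (the $-1$ from the middle column killing the $+1$ contributed by the $B_k$ block, the $B_k$ and $-B_k$ blocks cancelling in the right-hand columns, and so on). Your block decompositions of $A_{k+1}$ and $B_{k+1}$, and the observation that the full set occupies the last coordinate of the bottom block so that $\uvec_{k+1}$ splits as $(\mathbf{0},0,\uvec_k)$, match what the paper's tabular definitions encode.

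The one genuine divergence is part (b). The paper does not compute $A_k\cdot\Transpose{\uvec}$ directly; instead it derives (b) from (a) together with the induction hypothesis of Lemma~\ref{lem:inverse} (namely $A_k B_k=I_k$) and the symmetry of $B_k$, writing $A_k\cdot\Transpose{\uvec}=A_k\cdot\Transpose{B_k}\cdot\mathbf{1}=I_k\cdot\mathbf{1}=\mathbf{1}$. This makes Lemmas~\ref{lem:inverse} and~\ref{lem:aux-identity} mutually inductive: each stage of one induction consumes the previous stage of the other. Your version instead proves (b) by a self-contained block computation ($A_k\Transpose{\uvec_k}=\Transpose{\mathbf{1}}$ by the inductive hypothesis for the top block, $\mathbf{1}\cdot\Transpose{\uvec_k}=1$ in the middle, and the all-ones block times a unit vector giving $\Transpose{\mathbf{1}}$ at the bottom). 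That is slightly longer but arguably cleaner, since it decouples this lemma from Lemma~\ref{lem:inverse} entirely and removes the interleaved-induction bookkeeping; the paper's route buys a one-line proof of (b) at the cost of that entanglement. Either way the content is the same and your argument is complete.
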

\begin{proof}
For part (a), we show that the row sums of all rows of $B_k$ are zero
except the last row, which has a row sum of one. 
The proof is by induction. The base case $B_1 = 1$ is clear. 
For the inductive step from $k$ to $k+1$, first notice that all the 
entries in columns $j < 2^k - 1$ are zero by induction hypothesis. 
For column $2^k - 1$, the row sum of $B_k$ contributes 1 by induction
hypothesis, from which 1 is subtracted because of the entry in the
middle column.
Column $2^k$ adds up to 0 explicitly, and columns 
$j=2^k+1, \ldots, 2^{k+1}-2$ have terms of $B_k$ and $-B_k$ canceling
out. Finally, for the last column, the entries of $B_k$ and $-B_k$
cancel out, leaving the entry 1 from the middle column.

For part (b), simply notice that using part (a) and the induction
hypothesis of Lemma \ref{lem:inverse} (for $k$), we get that
$A_k \cdot \Transpose{\uvec} = A_k \cdot \Transpose{B_k} \cdot
\mathbf{1} = I_k \cdot \mathbf{1} = \mathbf{1}$.
Here, we used that $B_k$ is symmetric.
\end{proof}

The next lemma shows that so long as all \EdgeProb{u}{S} are
non-negative, by setting \EdgeProb{u}{\emptyset} appropriately, we can
always obtain a probability distribution.
\begin{lemma} \label{lem:atmostone}
With $\EdgeProb{u}{S}$ defined as $\exav[u] = B \cdot \influv[u]$,
we have $\sum_{S} \EdgeProb{u}{S} \leq 1$.
\end{lemma}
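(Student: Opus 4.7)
The plan is to expand the claim directly using the formula for $B$ and then apply Lemma \ref{lem:aux-identity}(a) to telescope the double sum. Writing out the product, we have
\[ \sum_{S \neq \emptyset} \EdgeProb{u}{S}
\;=\; \sum_{S \neq \emptyset} \sum_{T \neq \emptyset} b_{S,T}\,\localact{u}{T}
\;=\; \sum_{T \neq \emptyset} \Bigl(\sum_{S \neq \emptyset} b_{S,T}\Bigr)\,\localact{u}{T}. \]
The inner sum $\sum_S b_{S,T}$ is precisely the $T$-th entry of the row vector $\mathbf{1}\cdot B$. By Lemma \ref{lem:aux-identity}(a), this row vector equals $\uvec$, so the inner sum vanishes for every $T$ except the one indexing the last coordinate of $\uvec$.

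The first substantive step is therefore to identify this ``last'' set under the canonical ordering. I would argue by induction on $n$: the canonical order on $\{1,\ldots,k\}$ ends with the canonically ordered sets of the form $T\cup\{k\}$ for $T\subseteq\{1,\ldots,k-1\}$, so its last element is the last canonical set on $\{1,\ldots,k-1\}$ joined with $\{k\}$. The inductive hypothesis says that the former equals $\{1,\ldots,k-1\}$, giving the full set $V=\{1,\ldots,n\}$ as the final index.

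Combining these two observations gives
\[ \sum_{S \neq \emptyset} \EdgeProb{u}{S} \;=\; \localact{u}{V}. \]
Since \LOCALACT{u} is an activation function in the Mossel-Roch sense, $\theta_u$ is uniform on $[0,1]$, so $\localact{u}{V}$ is the probability that $u$ activates when every other node is already active; in particular $\localact{u}{V}\leq 1$, yielding the claim.

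The main (very mild) obstacle is the bookkeeping around the canonical ordering, i.e.\ verifying that the ``last coordinate'' hit by $\uvec$ really is the entry corresponding to $T=V$ rather than some other subset; once this is pinned down, the rest is an immediate application of the previous lemma and the fact that activation functions take values in $[0,1]$. Note that the proof also shows that the inequality is tight exactly when $\localact{u}{V}=1$, i.e.\ when $u$ is certain to activate once all other nodes are active; the slack $1-\localact{u}{V}$ is assigned to $\EdgeProb{u}{\emptyset}$ to make the distribution sum to one, which is what the subsequent analysis relies on.
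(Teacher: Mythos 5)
Your proof is correct and is essentially the paper's argument: both compute $\sum_S \EdgeProb{u}{S} = (\mathbf{1}\cdot B)\cdot\influv[u]$ and invoke Lemma \ref{lem:aux-identity}(a) to reduce this to $\localact{u}{\SET{1,\ldots,n}} \leq 1$. The only difference is that you unroll the matrix product into a double sum and explicitly verify that the last index in the canonical ordering is the full set $V$ — a detail the paper leaves implicit — so there is nothing substantive to add.
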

\begin{proof}
Let $\mathbf{1}$ denote the all-ones vector as before. We can rewrite
\[ 
\sum_{S} \EdgeProb{u}{S}
\; = \; \mathbf{1} \cdot (B \cdot \influv[u])
\; = \; (\mathbf{1} \cdot B) \cdot \influv[u].
\]
Using Lemma \ref{lem:aux-identity}(a), the sum is exactly equal to
$\localact{u}{\SET{1, \ldots, n}} \leq 1$, completing the proof.
\end{proof}

By Lemma \ref{lem:inverse}, we know that
$\exav[u] = B \cdot \influv[u]$. And by Lemma \ref{lem:atmostone}, the
entries sum up to at most 1. Thus, it remains to show that
the entries of \exav[u] are non-negative
if and only if \LOCALACT{u} satisfies the conditions of Theorem
\ref{thm:coverage}. To relate these formulations, we prove the
following non-recursive characterization of discrete derivatives.

\begin{lemma} \label{lem:discrete-equivalent}
For all sets $T$, we have that
\Equation{\der{f}{T}{W}}{%
\sum_{S \subseteq T} (-1)^{\SetCard{T} - \SetCard{S}} f(W \cup S).}
\end{lemma}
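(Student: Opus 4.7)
The plan is to prove the identity by induction on $\SetCard{T}$, driven directly by the recursive definition of the discrete derivative. The base case $T = \emptyset$ is immediate: the right-hand side contains a single term, corresponding to $S = \emptyset$ with sign $(-1)^{0} = 1$, and evaluates to $f(W)$, which is exactly $\der{f}{\emptyset}{W}$ by definition.

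For the inductive step, I would pick any element $v \in T$, write $T = R \cup \SET{v}$ with $v \notin R$, and apply the recursive definition to obtain $\der{f}{T}{W} = \der{f}{R}{W \cup \SET{v}} - \der{f}{R}{W}$. The induction hypothesis applied to the smaller set $R$ expands each of the two right-hand terms as a signed sum over subsets $S' \subseteq R$. To match these sums against the claimed closed form, I would partition the subsets $S \subseteq T$ by whether they contain $v$. Subsets with $v \notin S$ have the form $S = S' \subseteq R$, and since $\SetCard{T} = \SetCard{R} + 1$, they carry sign $(-1)^{\SetCard{T}-\SetCard{S'}} = -(-1)^{\SetCard{R}-\SetCard{S'}}$, reproducing exactly the expansion of $-\der{f}{R}{W}$. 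Subsets with $v \in S$ have the form $S = S' \cup \SET{v}$ with $S' \subseteq R$, and carry sign $(-1)^{\SetCard{T}-\SetCard{S}} = (-1)^{\SetCard{R}-\SetCard{S'}}$, reproducing the expansion of $\der{f}{R}{W \cup \SET{v}}$. Adding the two cases yields the desired identity.

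There is no substantial obstacle here; the argument is a routine bookkeeping exercise once the recursion is unrolled one step, and I would keep the write-up correspondingly short. As a pleasant side benefit, the closed form on the right-hand side is manifestly symmetric in the elements of $T$, treating no element specially. This retroactively justifies the well-definedness remark preceding the lemma: whatever order the recursive definition uses to peel off elements of $T$, the resulting value must equal the order-independent right-hand side, so $\der{f}{T}{W}$ is unambiguously defined.
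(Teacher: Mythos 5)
Your proof is correct and follows essentially the same route as the paper's: induction on $\SetCard{T}$, peeling off one element $v$ via the recursive definition and splitting the subsets of $T$ according to whether they contain $v$. The closing observation that the order-independent closed form certifies well-definedness of the discrete derivative is a nice bonus not spelled out in the paper, but the core argument is identical.
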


\begin{proof}
The proof is by induction on $\SetCard{T}$.
For $T = \emptyset$, the claim is trivial.
Now, consider a set $T_{k+1} = T_k \cup \SET{t}$ of size $k+1$.
By definition of the discrete derivative and induction hypothesis,
\begin{eqnarray*}
\der{f}{T_{k+1}}{W}
& = & \der{f}{T_k}{W \cup \SET{t}} - \der{f}{T_k}{W}\\
 &=&  \sum_{S \subseteq T_k} (-1)^{k - \SetCard{S}} f(W \cup S \cup \SET{t})
 - \sum_{S \subseteq T_k} (-1)^{k - \SetCard{S}} f(W \cup S) \\
& = & \sum_{S \subseteq T_{k+1}: t \in S} (-1)^{k+1 - \SetCard{S}} f(W \cup S)
 + \sum_{S \subseteq T_{k+1}: t \notin S} (-1)^{k+1 - \SetCard{S}} f(W \cup S)\\
& = & \sum_{S \subseteq T_{k+1}} (-1)^{k+1 - \SetCard{S}} f(W \cup S),
\end{eqnarray*}
which completes the inductive proof.
\end{proof}

\begin{extraproof}{Theorem \ref{thm:coverage}}
Fix any node $u$, and define $\exav[u] = B \cdot \influv[u]$.
By Lemma \ref{lem:discrete-equivalent}, we can write the discrete
derivative of \LOCALACT{u} at $\Compl{T}$ as
\Equation{\der{\LOCALACT{u}}{T}{\Compl{T}}}{%
\sum_{S \subseteq T} (-1)^{\SetCard{T} - \SetCard{S}}
\localact{u}{\Compl{T} \cup S}.}
Now, if $\SetCard{T}$ is odd, then
$(-1)^{\SetCard{T} - \SetCard{S}} = (-1)^{\SetCard{S} + 1}$,
so we can rewrite the above as
\[ 
\sum_{S \subseteq T} (-1)^{\SetCard{S}+1} \localact{u}{\Compl{T} \cup S}
\; = \; \sum_{W \supseteq \Compl{T}} (-1)^{\SetCard{W \cap T}+1} \localact{u}{W}
\; = \; \EdgeProb{u}{T}.
\]
Similarly, if $\SetCard{T}$ is even, then
$(-1)^{\SetCard{T} - \SetCard{S}} = (-1)^{\SetCard{S}}$, so we can
rewrite the discrete derivative as
\[ 
\sum_{S \subseteq T} (-1)^{\SetCard{S}} \localact{u}{\Compl{T} \cup S}
\; = \; \sum_{W \supseteq \Compl{T}} (-1)^{\SetCard{W \cap T}} \localact{u}{W}
\; = \; - \EdgeProb{u}{T}.
\]
Thus, the \EdgeProb{u}{T} are all non-negative (and the probability
distribution thus well-defined) if and only if
$\der{\LOCALACT{u}}{T}{\Compl{T}} \geq 0$ for $\SetCard{T}$ odd,
and
$\der{\LOCALACT{u}}{T}{\Compl{T}} \leq 0$ for $\SetCard{T} > 0$ even.
\end{extraproof}

\Omit{By claim \ref{inverse} and using the formulation of \ref{bstu}
  we have $$\exa{W}{U} = \sum_{T \supseteq U \setminus W} (-1)^{|T
    \cap W| + 1}\influ{T}{U}$$
Moreover, by lemma \ref{add2one}, we have $\sum_{S \in U} \exa{S}{U} =
1$.
}

\subsection{Coverage Property of the Seed Set Process}
\label{sec:seed-set-coverage}
In this section, we establish the following theorem.

\begin{theorem}\label{thm:p2p-coverage}
The Seed Set Process is a coverage process.
\end{theorem}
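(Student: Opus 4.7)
The plan is to apply Theorem~\ref{thm:coverage} to the local activation functions $\LOCALACT{u}$ describing the Seed Set Process. In the Seed Set Model, $\maxpay[u] = C_u = \cost[u] \sum_v \de[v]$ and $\paycoef[u] \sim U[0,1]$; hence, given a currently active set $S$, node $u$ decides to share iff $\paycoef[u] \geq T_u(S)$, where
\[
T_u(S) \; := \; \frac{1}{\sum_v \de[v]} \sum_v \frac{\de[v]\,\p[v]{u}}{\sum_{w \in S \cup \SET{u}} \p[v]{w}}.
\]
Reparameterizing $\theta_u := 1 - \paycoef[u]$ (still uniform on $[0,1]$) and setting $\localact{u}{S} := 1 - T_u(S)$ embeds the process into the threshold framework underlying Theorem~\ref{thm:coverage}; one checks directly that $\LOCALACT{u}$ is monotone non-decreasing in $S$ and that $\localact{u}{\emptyset} = 0$, since for $S = \emptyset$ the denominator collapses to $\p[v]{u}$ and the inner sum telescopes to $\sum_v \de[v]$.

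The main obstacle is verifying the sign conditions on the high-order discrete derivatives $\der{\LOCALACT{u}}{T}{\Compl{T}}$, since each summand of $T_u$ is a rational function of $S$. The key trick I will use is the integral representation $1/b = \int_0^\infty e^{-bt}\,dt$, which rewrites the $v$-th summand as
\[
\frac{\p[v]{u}}{\sum_{w \in S \cup \SET{u}} \p[v]{w}} \; = \; \p[v]{u} \int_0^\infty e^{-t\,\p[v]{u}} \prod_{w \in S} e^{-t\,\p[v]{w}} \, dt.
\]
The $S$-dependence is now fully multiplicative. For any set function of product form $F(W) = \prod_{w \in W} h(w)$, the formula in Lemma~\ref{lem:discrete-equivalent} combined with the standard product--sum identity gives
\[
\der{F}{T}{\Compl{T}} \; = \; \Big( \prod_{w \in \Compl{T}} h(w) \Big) \cdot \prod_{w \in T} \big( h(w) - 1 \big).
\]

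Applying this with $h_w(t) := e^{-t\,\p[v]{w}} \in (0,1]$ and exchanging the discrete derivative with the integral, every factor $h_w(t) - 1$ is non-positive while all remaining factors are strictly positive; hence the integrand, and therefore the discrete derivative of the $v$-th summand at $\Compl{T}$, has sign $(-1)^{\SetCard{T}}$. Summing over $v$ preserves sign, so $\der{T_u}{T}{\Compl{T}}$ has sign $(-1)^{\SetCard{T}}$, and for $\SetCard{T} \geq 1$ the additive constant in $\LOCALACT{u}$ vanishes, yielding $\der{\LOCALACT{u}}{T}{\Compl{T}} = -\der{T_u}{T}{\Compl{T}}$ with sign $(-1)^{\SetCard{T}+1}$: non-negative for $\SetCard{T}$ odd and non-positive for $\SetCard{T}$ positive even. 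The remaining case $T = \emptyset$ reduces to $\localact{u}{V} \geq 0$, which is immediate. All three hypotheses of Theorem~\ref{thm:coverage} are thereby verified, establishing Theorem~\ref{thm:p2p-coverage}.
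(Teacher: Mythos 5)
Your proposal is correct, and at the top level it mirrors the paper's proof: you recast the Seed Set Process as a threshold process with uniformly random thresholds and the very same activation function $\localact{u}{S}=1-\frac{1}{\sum_v \de[v]}\sum_v \de[v]\p[v]{u}\big/\sum_{w\in S\cup\SET{u}}\p[v]{w}$, check monotonicity and $\localact{u}{\emptyset}=0$, and then invoke Theorem \ref{thm:coverage}. Where you genuinely depart is in the technical heart, the alternating-sign conditions on $\der{\LOCALACT{u}}{T}{\Compl{T}}$: the paper proves these by passing to the continuous extension $\localc{u}{\SVEC[S]}$, computing by induction an explicit closed form (with sign $(-1)^{\SetCard{T}+1}$) for its order-$\SetCard{T}$ partial derivatives, and translating back via an iterated-integral identity between discrete and continuous derivatives; you instead use the Laplace-type representation $1/b=\int_0^\infty e^{-bt}\,dt$ to write each summand as an integral of a multiplicative set function with factors $e^{-t\p[v]{w}}\in(0,1]$, and then apply the exact identity $\der{F}{T}{\Compl{T}}=\bigl(\prod_{w\in\Compl{T}}h(w)\bigr)\prod_{w\in T}\bigl(h(w)-1\bigr)$ for $F(W)=\prod_{w\in W}h(w)$ (a direct consequence of Lemma \ref{lem:discrete-equivalent}), which makes the sign $(-1)^{\SetCard{T}}$ of the discrete derivative of your demand term $T_u$, and hence the required signs for $\LOCALACT{u}$, immediate; interchanging the finite discrete-derivative sum with the integral is harmless. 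Your route buys a computation-free sign argument -- no induction on the order of derivatives and no discrete-to-continuous bridge -- while the paper's route yields explicit formulas for all higher-order derivatives; both ultimately exhibit the same structural fact, namely that each map $S\mapsto \p[v]{u}\big/\sum_{w\in S\cup\SET{u}}\p[v]{w}$ is a completely monotone set function. Two minor touch-ups: as written, your integral representation multiplies the standalone factor $e^{-t\p[v]{u}}$ by $\prod_{w\in S}e^{-t\p[v]{w}}$, which double-counts $u$ when $u\in S$, so either take the product over $S\cup\SET{u}$ or observe that $\LOCALACT{u}$ is constant in the $u$-coordinate, so derivatives with $u\in T$ vanish and the product identity is only needed for $T\not\ni u$; and summands with $\p[v]{u}=0$ should be read as identically zero (and dropped) so that each integral converges -- the paper's proof glosses over the same degenerate terms.
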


\begin{proof}
In order to prove this theorem, we want to apply Theorem
\ref{thm:coverage}.
To do so, we need to show that the local decisions of nodes about
sharing can be cast in terms of submodular threshold functions. 
Specifically, we define
\Definition{\localact{u}{S}}{%
1 - \frac{1}{C_u} \cdot \cost[u] \cdot
\sum_{v} \frac{\de[v] \p[v]{u}}{\sum_{w \in S \cup \SET{u}} \p[v]{w}}}
and let $\theta_u = 1 - \frac{\pay[u]}{C_u}$.
(Recall from Section \ref{sec:payment} that $C_u = \cost[u] \cdot \sum_{v} \de[v]$.)

A node $u$ becomes active if doing so has positive utility, i.e., if
$\pay[u] > \cost[u] \cdot
\sum_{v} \frac{\de[v] \p[v]{u}}{\sum_{w \in S \cup \SET{u}} \p[v]{w}}$.
Dividing both sides by $C_u$, and subtracting from 1 shows that this
is equivalent to saying that
\GenEquation{1 - \frac{\pay[u]}{C_u}}{<}{1 - \frac{1}{C_u} \cdot
\cost[u] \cdot \sum_{v} \frac{\de[v] \p[v]{u}}{\sum_{w \in S \cup \SET{u}} \p[v]{w}}.}
Since \pay[u] is uniformly random in $[0,C_u]$ by the definition of 
\maxpay[u] in the Seed Set Model, this condition is equivalent to saying that
$\theta_u < \localact{u}{S}$. Thus, we have shown that the activation process
can be equivalently recast in terms of threshold activations
functions.

Finally, we need to show that for every node $u$, all derivatives
\der{\LOCALACT{u}}{T}{S} are non-negative when $\SetCard{T}$ is odd
and non-positive when $\SetCard{T} > 0$ is even.
(The fact that $\localact{u}{S} = \der{\LOCALACT{u}}{\emptyset}{S}$ is
non-negative follows directly by definition.)
Let
\Equation{\localc{u}{x_1, \dots, x_n}}{%
1 - \frac{1}{C_u} \cdot \cost[u] \cdot
\sum_{v} \frac{\de[v] \p[v]{u}}{\sum_{v_i \in V} \p[v]{v_i} x_i}}
be the continuous equivalent of the local influence function
\LOCALACT{u}.
For a set $S$, let \SVEC[S] denote the $n$-dimensional vector with
$\svec[S]{i} = 1$ if $v_i \in S \cup \SET{u}$ and
$\svec[S]{i} = 0$ otherwise.
Then, $\localact{u}{S} = \localc{u}{\SVEC[S]}$.
Notice that by definition, there is no division by zero.

Writing $d \SVECD{T} = d\svec{i_1} d\svec{i_2} \dkcomment{\cdots} d\svec{i_{\SetCard{T}}}$,
where $T = \SET{i_1,i_2, \ldots, i_{\SetCard{T}}}$,
an easy inductive proof first shows that
\Equation{\der{\LOCALACT{u}}{T}{S}}{%
\int_0^1 \dots^{|T|} \int_0^1
\frac{d \localc{u}{\SVEC[S]}}{d \SVECD{T}} d \SVECD{T}.}

It remains to show that each term inside the integration is
non-negative for odd $|T|$ and non-positive for even $|T|$.
We accomplish this by showing that
\Equation{\frac{d \localc{u}{\SVEC[S]}}{d \SVECD{T}}}{%
(-1)^{|T| + 1} |T|! \frac{\cost[u]}{C_u}
\sum_{v} \frac{\de[v] \p[v]{u} \prod_{t \in T} \p[v]{t}}{%
(\sum_{v_i \in V} \p[v]{v_i} \svec[S]{i})^{|T|+1}}.}

The proof is by induction.
The base case: $|T| = 1$ can be verified easily.
Assume that the claim holds for $|T| = i-1$. We have
\begin{eqnarray*}
\frac{d \localc{u}{\SVEC[S]}}{d \SVECD{T} d \svec{i}}
&=&
\frac{d}{d \svec{i}} (-1)^{|T| + 1} |T|!
\frac{\cost[u]}{C_u} \sum_{v}
\frac{\de[v] \p[v]{u} \prod_{t \in T} \p[v]{t}}{%
(\sum_{v_i \in V} \p[v]{v_i} \svec{i})^{|T|+1}}\\
&=&
(-1)(-1)^{|T| + 1} |T|! \frac{\cost[u]}{C_u} \cdot
\sum_{v} \frac{(|T| + 1) \p[v]{v_i} \de[v] \p[v]{u}
                  \prod_{t \in T} \p[v]{t} (\sum_{v_i \in V} \p[v]{v_i} \svec{i})^{|T|}}{%
                 (\sum_{v_i \in V} \p[v]{v_i} \svec{i})^{2|T|+2}}\\
&=&
(-1)^{|T| + 2} |T+1|! \frac{\cost[u]}{C_u}
\sum_{v} \frac{\de[v] \p[v]{u} \prod_{t \in T \cup \SET{v_i}} \p[v]{t}}{%
(\sum_{v_i \in V } \p[v]{v_i} \svec{i})^{|T|+2}}.
\end{eqnarray*}
This completes the inductive proof, and thus the proof of
Theorem \ref{thm:p2p-coverage}.
\end{proof}

While we defined the Seed Set Process primarily as a tool for
analysis, we remark here that Theorem \ref{thm:p2p-coverage} has a
direct consequence for the optimization problem of
maximizing the expected total number of active nodes at the end of the
process, subject to a size constraint on the seed set $S$.
A Theorem of Nemhauser et
al.~\cite{cornuejols:fisher:nemhauser,nemhauser:wolsey:fisher}
states that if $f$ is any non-negative, monotone, and submodular
function on sets, then the greedy algorithm is a polynomial-time
$(1-1/e)$-approximation (where $e$ is the base of the natural
logarithm). Since we can approximate the expected number of active
nodes under the Seed Set Process arbitrarily closely by simulating the
activation process (see \cite{InfluenceSpread} for an in-depth
discussion of the greedy algorithm), we obtain the following corollary:

\begin{corollary} \label{cor:approximation}
The best starting set $S$ for the Seed Set Process
can be approximated within $(1 - 1/e - \epsilon)$ in polynomial time,
for any $\epsilon > 0$.
\end{corollary}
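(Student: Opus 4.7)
The plan is to apply the classical theorem of Nemhauser, Wolsey, and Fisher on greedy maximization of monotone submodular set functions subject to a cardinality constraint, with the objective $f(S) = \Expect{\SetCard{\totalinf{S}}}$, the expected number of nodes active at the end of the Seed Set Process started from $S$. To invoke their theorem, I need $f$ to be non-negative, monotone, and submodular; and I need to evaluate it in polynomial time. The first three properties come from the coverage representation; the last requires a simulation argument.

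First I would verify the structural properties of $f$. Non-negativity is immediate. Monotonicity is inherited directly from the coverage representation guaranteed by Theorem \ref{thm:p2p-coverage}: in every realization of the random graph drawn from the associated distribution \DISTR, the set of nodes reachable from $S$ is contained in the set of nodes reachable from any superset $T \supseteq S$, so $\SetCard{\totalinf{S}} \le \SetCard{\totalinf{T}}$ pointwise, hence in expectation. Submodularity of $f$ follows from the general principle established in Section \ref{sec:concavity}, namely that for any coverage process and any submodular set function \SocWel, the map $S \mapsto \Expect{\SocWel[\totalinf{S}]}$ is submodular in $S$; applying this with the (trivially submodular) counting function $\SocWel[T] = \SetCard{T}$ yields submodularity of $f$.

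Second, with $f$ monotone, non-negative, and submodular, the Nemhauser-Wolsey-Fisher theorem guarantees that the greedy algorithm, which repeatedly adds to $S$ the element of largest marginal gain until the cardinality budget $k$ is reached, achieves a $(1-1/e)$-approximation to $\max_{|S| \le k} f(S)$ \emph{provided $f$ can be evaluated exactly}. Since $f$ is the expectation of a random process, exact evaluation is infeasible; this is the single technical obstacle and the only place where the extra $\epsilon$ in the approximation ratio enters.

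The main step is therefore to replace exact evaluation by Monte Carlo simulation, following the approach of \cite{InfluenceSpread}. Simulating one trajectory of the Seed Set Process from a given seed set $S$ takes polynomial time (at most $n$ rounds, each recomputing the local utility thresholds). Running $N = \mathrm{poly}(n, 1/\epsilon)$ independent trials and averaging yields an estimate $\hat f(S)$ that, by Hoeffding's inequality and since $\SetCard{\totalinf{S}} \le n$, is within an additive error $\epsilon'$ of $f(S)$ with probability $1 - \delta$ for any chosen inverse-polynomial $\epsilon'$ and $\delta$. A union bound over the polynomially many marginal-gain evaluations made by greedy controls total failure probability. A standard error-propagation argument (again following \cite{InfluenceSpread}) shows that running greedy on $\hat f$ still achieves approximation ratio $(1 - 1/e - \epsilon)$ for the desired $\epsilon > 0$, yielding the corollary.
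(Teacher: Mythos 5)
Your proposal is correct and follows essentially the same route as the paper: establish that the expected number of eventually active nodes is non-negative, monotone, and submodular via the coverage-process representation (Theorems \ref{thm:p2p-coverage} and \ref{thm:covsub}), apply the Nemhauser--Wolsey--Fisher greedy guarantee, and absorb the Monte Carlo estimation error into the extra $\epsilon$ following \cite{InfluenceSpread}. Your write-up actually spells out the sampling and error-propagation details more explicitly than the paper does, but the argument is the same.
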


\subsection{Diminishing Returns of Expected Social Welfare}
\label{sec:concavity}

Finally, we use the machinery of coverage processes to show
diminishing returns of social welfare.
Consider an arbitrary coverage process.
When the coverage process starts with the set $T$, let \totalinf{T} be
a random variable describing the set of nodes active at the end of the
process. Thus, the distribution of \totalinf{T} for all $T$ precisely
characterizes the coverage process. Our main theorem is now the
following:

\begin{theorem} \label{thm:covsub}
Let $h(S)$ be any monotone submodular function of $S$.
Then, $\Expect{h(\totalinf{T})}$ is a monotone submodular function of
$T$, where the expectation is taken over the randomness in \totalinf{T}.
\end{theorem}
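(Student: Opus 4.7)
The plan is to reduce everything to the coverage representation. By Definition \ref{def:coverage}, there exists a distribution \DISTR over graphs with the property that, when $G$ is drawn from \DISTR, the random set $R_G(T)$ of vertices reachable from $T$ in $G$ has the same distribution as \totalinf{T}. Hence $\Expect{h(\totalinf{T})}$ equals the expectation over $G \sim \DISTR$ of $h(R_G(T))$, which exhibits it as a non-negative combination of the deterministic functions $T \mapsto h(R_G(T))$ indexed by graphs $G$ in the support of \DISTR. Since the class of monotone submodular functions is closed under non-negative combinations, it suffices to prove that for every \emph{fixed} graph $G$ the function $T \mapsto h(R_G(T))$ is monotone submodular.

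Monotonicity is immediate: if $T \subseteq T'$ then every vertex reachable from $T$ is reachable from $T'$, so $R_G(T) \subseteq R_G(T')$, and monotonicity of $h$ gives $h(R_G(T)) \leq h(R_G(T'))$.

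For submodularity, the key observation is that reachability distributes over unions: for any set $T$ and any vertex $v$, $R_G(T \cup \{v\}) = R_G(T) \cup R_G(\{v\})$, because a vertex is reachable from $T \cup \{v\}$ iff it is reachable from $T$ or from $v$. Given $S \subseteq T$ and any $v$, set $A := R_G(S)$, $B := R_G(T)$, $X := R_G(\{v\}) \setminus A$, and $Y := R_G(\{v\}) \setminus B$. Then $A \subseteq B$ and therefore $Y \subseteq X$, while by the distributivity just noted $A \cup X = R_G(S \cup \{v\})$ and $B \cup Y = R_G(T \cup \{v\})$. Applying submodularity of $h$ to $A \subseteq B$ with added set $X$, and then monotonicity of $h$ to $B \cup Y \subseteq B \cup X$, chains to
\[
h(R_G(S \cup \{v\})) - h(R_G(S)) \; \geq \; h(B \cup X) - h(B) \; \geq \; h(R_G(T \cup \{v\})) - h(R_G(T)),
\]
which is exactly the single-element submodularity condition for $T \mapsto h(R_G(T))$.

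The only genuine content is the distributivity identity $R_G(T \cup \{v\}) = R_G(T) \cup R_G(\{v\})$; once it is in hand, the rest is a short two-step chain using the defining properties of $h$, and averaging over $G \sim \DISTR$ propagates monotonicity and submodularity to $\Expect{h(\totalinf{T})}$. The main ``obstacle'' is therefore essentially notational bookkeeping---keeping the sets $A, B, X, Y$ straight so that the submodularity step and the monotonicity step point in the correct directions. It is worth noting that this proof strategy is considerably simpler than the general argument of \cite{mossel:roch:submodular} precisely because the coverage representation lets us commute $h$ past the randomness and work graph by graph.
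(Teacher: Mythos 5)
Your proposal is correct and follows essentially the same route as the paper: decompose the expectation over the graph distribution from the coverage property, use the identity $R_G(T\cup\{v\}) = R_G(T)\cup R_G(\{v\})$, and apply the set form of submodularity (Inequality (\ref{eq:submod-equiv})) for each fixed graph. Your extra monotonicity step is harmless but vacuous, since $B\cup X = B\cup Y = R_G(T)\cup R_G(\{v\})$ once $A\subseteq B$; the paper's one-step application of (\ref{eq:submod-equiv}) with the added set $R_G(\{v\})$ is the same argument with less bookkeeping.
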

This theorem follows from the general result of
\cite{mossel:roch:submodular}, since all coverage processes are
locally submodular, and our utility function is submodular with
respect to the set of sharing neighbors.
However, below we give a
very simple proof based on reachability in graphs using the fact that
\TOTALINF is a coverage process. This is useful for the purpose of
simulating the process and estimating \TOTALINF. It means that instead
of generating random thresholds and simulating a dynamic process, we
can generate a random graph and then simply use BFS
to find the number of reachable nodes.

\begin{proof}
Because \TOTALINF is a coverage process, by Theorem
\ref{thm:coverage}, there is a distribution \pr[\cdot]
over graphs $H$ such that for any set $T$, the set of nodes reachable
in $H$ from $T$ has the same distribution as \totalinf{T}.
Let \totalinf[H]{T} denote the set of nodes reachable from
$T$ in $H$. Then,
\Equation{\Expect{h(\totalinf{T})}}{%
\sum_{H} \pr[H] \cdot h(\totalinf[H]{T}).}

Fix some graph $H$ and let $S \subseteq T$ and $x \notin T$.
Then,
\begin{eqnarray*}
h(\totalinf[H]{T + x}) - h(\totalinf[H]{T})
& = &
h(\totalinf[H]{T} \cup \totalinf[H]{\SET{x}}) - h(\totalinf[H]{T})\\
& \leq &
h(\totalinf[H]{S} \cup \totalinf[H]{\SET{x}}) - h(\totalinf[H]{S})\\
& = &
h(\totalinf[H]{S + x}) - h(\totalinf[H]{S}),
\end{eqnarray*}
where the inequality followed from Inequality (\ref{eq:submod-equiv}),
and the equalities from the definitions of reachability in a graph.
Thus, for any fixed graph $H$, the function
$h(\totalinf[H]{T})$ is monotone and submodular in $T$.
Because the \pr[H] are probabilities,
$\Expect{h(\totalinf{T})}$ is a non-negative linear combination of
monotone submodular functions, and thus also monotone and submodular.
\end{proof}

The final piece of the proof of Theorem \ref{thm:demand-concave} is
the following lemma, showing that monotonicity and submodularity of
the Seed Set Model imply diminishing returns for the
original model. 
\begin{lemma} \label{lem:submodular-concave}
Let $f$ be a non-negative, monotone, submodular function on sets.
Consider the function $g$ defined as follows:
Each element $u$ is included in $S$ independently with
probability \InclProb{u}{\maxpay[u]}, where \INCLPROB[u] is an
increasing and concave function of \maxpay[u].
Define $g(\maxpayv) = \Expect{f(S)}$.
Then, $g$ is monotone and satisfies the diminishing returns property
as defined in Definition \ref{def:dimin}.
\end{lemma}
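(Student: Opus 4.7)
The plan is to reduce both claims to direct computation of first and second partial derivatives of $g$, exploiting the fact that $g$ is multilinear in the inclusion probabilities. I would write $q_u(\pi) := \InclProb{u}{\pi}$ for brevity, and for any (possibly equal) pair of indices $i, j$ condition on whether $i, j$ lie in $S$, so that the only dependence on $\maxpay[i]$ and $\maxpay[j]$ enters through $q_i(\maxpay[i])$ and $q_j(\maxpay[j])$. The key observation is that the coefficients arising from this conditioning are free of $\maxpay[i], \maxpay[j]$, so the chain rule reduces everything to a second-order difference of $f$ together with a first or second derivative of $q_u$.

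For the off-diagonal case $i \neq j$, I would let $S'$ be the random subset drawn from $V \setminus \SET{i,j}$ using the fixed inclusion probabilities of the other elements, and set $A_I = \Expect{f(S' \cup I)}$ for $I \subseteq \SET{i,j}$. Independence gives
\[ g \;=\; q_i q_j A_{\SET{i,j}} + q_i(1-q_j) A_{\SET{i}} + (1-q_i) q_j A_{\SET{j}} + (1-q_i)(1-q_j) A_{\emptyset}, \]
with $q_i, q_j$ standing for $q_i(\maxpay[i]), q_j(\maxpay[j])$. Taking the mixed partial via the chain rule then yields
\[ \frac{\partial^2 g}{\partial \maxpay[i]\, \partial \maxpay[j]} \;=\; q_i'(\maxpay[i])\, q_j'(\maxpay[j])\, \bigl( A_{\SET{i,j}} - A_{\SET{i}} - A_{\SET{j}} + A_{\emptyset} \bigr). \]
The bracket is non-positive by submodularity of $f$ applied pointwise to every realization of $S'$; the derivative factors are non-negative since each $q_u$ is increasing. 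Hence the mixed partial is $\leq 0$.

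For the diagonal case $i = j$, the plan is to write $g = q_i(\maxpay[i])\, B_1 + (1 - q_i(\maxpay[i]))\, B_0$ with $B_0 = \Expect{f(S')}$ and $B_1 = \Expect{f(S' \cup \SET{i})}$. Monotonicity of $f$ gives $B_1 \geq B_0$, so $\partial g / \partial \maxpay[i] = q_i'(\maxpay[i])(B_1 - B_0) \geq 0$, which also establishes monotonicity of $g$ coordinatewise and hence overall. The second partial $q_i''(\maxpay[i])(B_1 - B_0)$ is $\leq 0$ by concavity of $q_i$, completing the diminishing returns check.

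The only real obstacle is notational bookkeeping: setting up the conditioning so that the coefficients $A_I, B_0, B_1$ are manifestly independent of $\maxpay[i], \maxpay[j]$. A minor technical point is that the lemma assumes each $q_u$ merely increasing and concave, not necessarily $C^2$; this is inessential, since concave functions on an interval are differentiable almost everywhere, and the argument extends by a standard smoothing or by verifying the equivalent second-difference formulation of diminishing returns through an analogous discrete computation.
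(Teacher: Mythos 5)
Your proposal is correct and follows essentially the same route as the paper's proof: both differentiate the multilinear form of $g$ in the inclusion probabilities, using monotonicity of $f$ for the non-negativity of the first partials, concavity of $q_i$ for the non-positivity of the diagonal second partials, and submodularity of $f$ for the mixed partials. Your conditioning on membership of $i$ and $j$ in $S$ is merely a more compact bookkeeping of the paper's explicit sum over all subsets, and your remark about $q_u$ not being assumed twice differentiable is a fair (and adequately handled) technical point that the paper glosses over.
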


\begin{proof}
First, notice that 
$g(\maxpayv) = \sum_{S' \subseteq \V} f(S') \prod_{u \in S'}
\InclProb{u}{\maxpay[u]} \prod_{u \notin S'} (1 - \InclProb{u}{\maxpay[u]})$. 
In order to show the diminishing returns property, it is enough to
show that
$\frac{\partial g(\maxpayv)}{\partial \maxpay[i]} \geq 0$ and
$\frac{\partial g(\maxpayv)}{\partial \maxpay[i] \partial \maxpay[j]} \leq 0$ 
for all $i, j \in \V$. Using the definition of $g$, we have: 
\begin{eqnarray*}
\label{first-deriv}
\frac{\partial g(\maxpayv)}{\partial \maxpay[i]} 
&=& \phantom{-} \sum_{S \subseteq V, i \in S} f(S) 
  \cdot \frac{d \InclProb{i}{\maxpay[i]}}{d \maxpay[i]}  
  \cdot \prod_{u\in S, u \neq i} \InclProb{u}{\maxpay[u]}
  \cdot \prod_{u \notin S}(1-\InclProb{u}{\maxpay[u]})\\
&& - \sum_{S \subseteq V, i \notin S} f(S) 
  \frac{d \InclProb{i}{\maxpay[i]}}{d \maxpay[i]}   
  \cdot \prod_{u\in S} \InclProb{u}{\maxpay[u]}
  \cdot \prod_{u \notin S, u \neq i}(1-\InclProb{u}{\maxpay[u]})\\ 
&=& \sum_{S \subseteq V, i \in S} (f(S) - f(S - i) )
  \cdot \frac{d \InclProb{i}{\maxpay[i]}}{d \maxpay[i]}
  \cdot \prod_{u\in S, u \neq i} \InclProb{u}{\maxpay[u]}
  \cdot \prod_{u \notin S}(1-\InclProb{u}{\maxpay[u]}) \\
& \geq & 0.
\end{eqnarray*}

The last inequality holds because 
$\frac{d \InclProb{i}{\maxpay[i]}}{d \maxpay[i]} \geq 0$ and $f$ is monotone. 

Next we need to show that 
$\frac{\partial g(\maxpayv)}{\partial \maxpay[i] \partial \maxpay[j]} \leq 0$ 
for all $i, j \in \V$. For $i = j$, a calculation similar to the one
above shows that
\begin{eqnarray*}
\frac{\partial^2 g(\maxpayv)}{\partial \maxpay[i]^2} 
& = & 
\sum_{S \subseteq V, i \in S} (f(S) - f(S - i))
  \cdot \frac{d^2 \InclProb{i}{\maxpay[i]}}{d \maxpay[i]^2}  
  \cdot \prod_{u\in S, u \neq i} \InclProb{u}{\maxpay[u]}
  \cdot \prod_{u \notin S}(1-\InclProb{u}{\maxpay[u]}),
\end{eqnarray*}
which is non-positive because $f$ is monotone and $q_i$ is concave.

Finally, suppose that $i \neq j$. 
Using a calculation similar to the one above, we can rewrite 
$\frac{\partial g(\maxpayv)}{\partial \maxpay[i] \partial \maxpay[j]}$
as 
\[ \sum_{S \subseteq \V \setminus \{i,j\}} 
\left(f(S+i+j) - f(S+i) -f(S+j) + f(S)\right)
\cdot \frac{d \InclProb{i}{\maxpay[i]}}{d \maxpay[i]} 
\cdot \frac{d \InclProb{j}{\maxpay[j]}}{d \maxpay[j]} 
\cdot \prod_{u\in S} \InclProb{u}{\maxpay[u]} 
\cdot \prod_{u \notin S, u \neq i,j}(1-\InclProb{u}{\maxpay[u]}),
\]
which is non-positive because $f$ is submodular and $q_i, q_j$ are
concave.
\end{proof}

With Theorem \ref{thm:p2p-coverage} and Lemma
\ref{lem:submodular-concave}, we can now complete the proof of Theorem
\ref{thm:demand-concave}.

\begin{extraproof}{Theorem \ref{thm:demand-concave}}
Consider one node $u$. The probability that it becomes active
initially is
\[ 
\IniProb{u} \; = \; \Prob{\pay[u] \geq C_u} \; = \; 1-\frac{C_u}{\maxpay[u]}.
\]
Recall that $C_u = \cost[u] \cdot \sum_{v} \de[v]$, and $\maxpay[u]
\geq C_u$ in our model, so this number is always non-negative.

Clearly, \IniProb{u} is also a monotone increasing function of \maxpay[u].
To verify concavity, we simply take two derivatives: the second
derivative is $\frac{-2C_u}{(\maxpay[u])^3}$, and thus non-positive, so
\IniProb{u} is concave.

Now, consider all the nodes $u$ which did not initially become active.
This is equivalent to saying that $\pay[u] \leq C_u$. But subject to
this bound, \pay[u] is uniformly random, so we are in the situation of
having an initially active set $S$, and for each remaining node $u$,
the payment is independently and uniformly random in $[0,C_u]$.
By Theorems \ref{thm:p2p-coverage} and \ref{thm:covsub},
the expected social welfare \esw[S] is a monotone and submodular
function of the seed set $S$, so long as $\SocWel$ is submodular in
the set of active nodes.
We can therefore apply Lemma \ref{lem:submodular-concave} to
$\Expect{h(\totalinf{T})}$, which implies that
$\esw[{\maxpay[1], \ldots, \maxpay[n]}]$ has the diminishing returns property.
\end{extraproof}

Each of the social welfare functions
listed in Section \ref{sec:preliminaries} can be shown to be monotone
and submodular in the set of active nodes by simple calculations.
Thus, for all of these objective functions, the total social welfare
is a monotone function of the payments with diminishing returns properties.

\section{Experimental evaluation} \label{sec:experiments}
In this section, we summarize our observations based on
simulations both on synthetic and real-world P2P networks.

We have developed a simulator for the three models described in
Section~\ref{sec:preliminaries}.
\subsection{Simulation model}

Given a payment scheme \maxpayv, we generate random \paycoef[u] and compute
the number of active (sharing) nodes. We also compute the value of the
social welfare according to the utility functions
in Section \ref{sec:preliminaries}.

In addition, we calculate the total payments, and the average
payment per active and per serviced node. These numbers are averaged
over 1000 iterations, each with different random \paycoef.

\noindent {\bf Network topology}. For our evaluation, we consider
different network topologies, including two network topologies derived
from real-world data sets
\cite{gummadi:saroiu:gribble:king,mitking,havardking}, and a regular
two-dimensional grid topology.
The real-world data sets are based on measured end-to-end latencies
between pairs of servers deployed in the Internet
\cite{gummadi:saroiu:gribble:king}.
The MIT King data set \cite{mitking} is symmetric and measures RTT
between each pair among 1740 servers, while the Harvard King data set
\cite{havardking} provides asymmetric median latencies between each
pair among 1895 servers. In addition to networks derived from these two data sets, we also
consider a regular two-dimensional grid.

We derive the download percentage matrix \PM from the latencies
by setting $\p[v]{u} = \max(0, 1 - \frac{\latency[u]{v}}{\threshold})$,
where \latency[u]{v} is the latency from $u$ to $v$,
and \threshold is a hard threshold for tolerable latencies.
This models the fact that users prefer to download from peers to which
they have fast connections, and have a threshold beyond which latency
may not be tolerable any more. By varying \threshold, we can obtain
denser or sparser download network topologies.
We will refer to the networks derived from the MIT King data set as
\todef{MIT networks}, and those derived from the Harvard King data set
as \todef{Harvard networks}.

In addition to networks derived from these two data sets, we also
consider a regular two-dimensional grid. We
do not report all results for all topologies here. Unless stated
otherwise, our observed trends apply to all of these topologies.

\noindent {\bf Payment schemes and non-sharing peers}.
In our experiments, we consider different payment schemes \maxpayv, to
study the impact of payments on the propagation of sharing behavior.
We parameterize the schemes with two parameters $\alpha, \beta$, and
set $\maxpay[u] = \alpha \cdot \degree{u}^\beta$, where
\degree{u} is the degree of node $u$ in the network defined by the
\p[u]{v} values. Thus, the financial utilities are chosen uniformly at
random from the interval $[0, \alpha \cdot \degree{u}^\beta]$.

We also consider the impact of peers who cannot (or do not want to)
share the file at all, regardless of the payment offered. Such peers
may still be interested in downloading the file. Their presence can be
expected to decrease the sharing behavior in networks, as they will
place load on other peers without contributing.
We call such nodes ``\dummy'' nodes, and consider the impact of different
percentages of \dummy nodes on the overall sharing percentage.

\begin{figure*}
\begin{center}
\begin{tabular}{cc}
  \psfig{figure=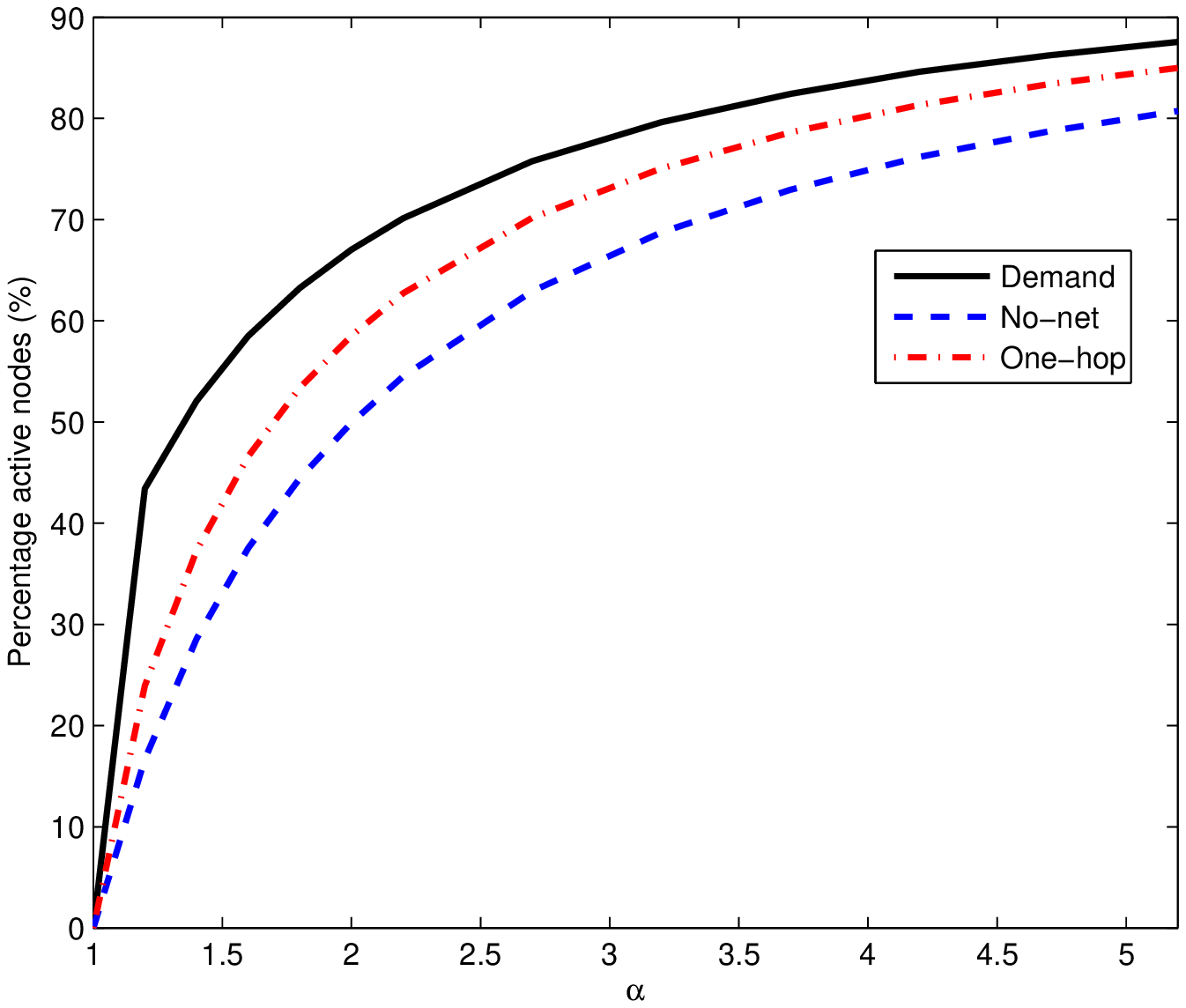,width=3.0in} &
    \psfig{figure=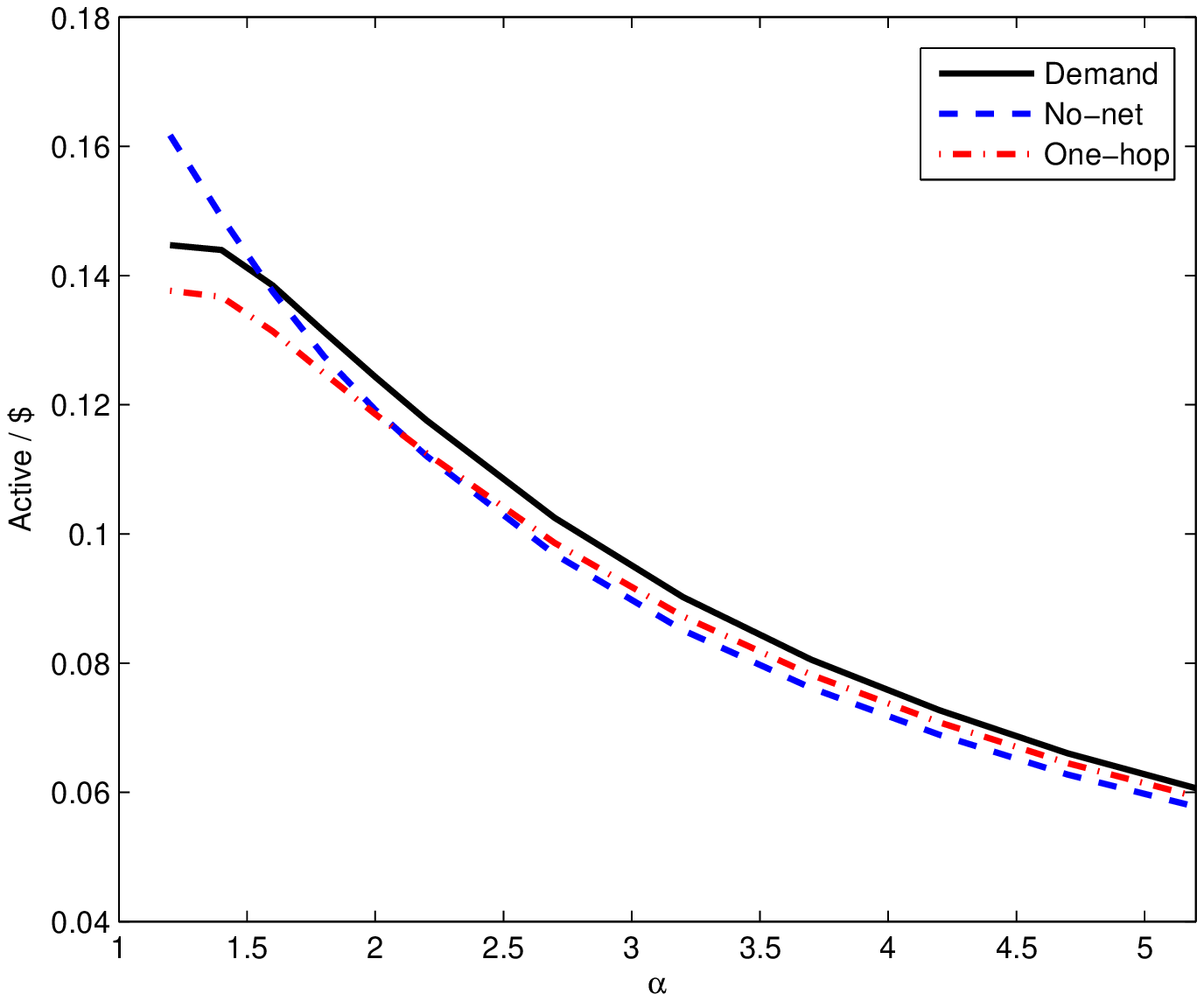,width=3.0in} \\
    (a) Percentage active nodes &
    (b) Active nodes per unit of payment \\
\end{tabular}
\caption{Comparison of different models, using the \todef{Harvard network} with
  no \dummy nodes.}\label{fig:diffmodel}
\end{center}
\end{figure*}

\subsection{Results}
\noindent {\bf Comparison of different models}. We begin by estimating
the size of network effects, by comparing the \demand model with the
\nonet and \onehop models.
Figure~\ref{fig:diffmodel} (a) compares the participation rates under
the three models, with the same payment scheme and same network
(Harvard). We keep $\beta = 1$ constant in the payment scheme, and
vary $\alpha$. Thus, payments are proportional to nodes' degrees.
The figure shows that by ignoring network effects, we would
underestimate the number of sharing nodes by about 15\% on average,
and as much as 25\% (for $\alpha=1.2$). The same trends hold for the
fraction of \emph{serviced} nodes (not shown here): the number of
serviced nodes is underestimated by about 10\% if ignoring network
effects.

Figure~\ref{fig:diffmodel} (b) compares the number of active nodes per
unit of payment spent by the network administrator. This is an
interesting metric as it captures the tradeoff between participation
and payments. Compared to the number of active nodes, the choice of
model seems to have remarkably little impact on the estimate of this
quantity. For small values of $\alpha$, the network effects lead to
slightly higher payments per active nodes, as the network effects lead
to an activation of more high-degree nodes, which have higher
payments. This effect disappears as $\alpha$ increases, and more nodes
are activated in the \nonet model as well. The same trends hold for the number of
\emph{serviced} nodes per unit of payment spent (not shown here).

The results reported here stay essentially the same both for the MIT
and grid topologies. In particular, the underestimate of the number of
active nodes by the \nonet model is essentially the same in these
topologies. In the grid topology, the \nonet model in fact
overestimates the cost per active node by about 10\%, as the
dependence on the degree disappears, and network effects lead to an
activation of more nodes with smaller payments.

\begin{figure}
 \begin{center}
 \begin{tabular}{c}
   \psfig{figure=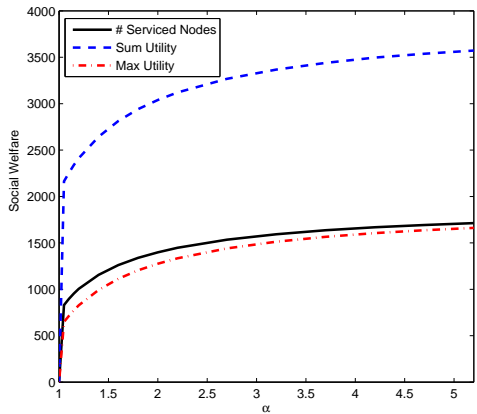,width=2.8in}
\end{tabular}
   \caption{Sum-Welfare, Max-Welfare and the number of serviced nodes, \todef{Harvard network}.}\label{fig:diffutil}
 \end{center}
\end{figure}

\noindent {\bf Different Social Welfare Functions}
We evaluate our theoretical results against the number of serviced nodes
and the two social welfare functions sum-welfare and max-welfare, as
defined in Section \ref{sec:preliminaries}.
All three are plotted in Figure~\ref{fig:diffutil}.
Although each social welfare function differs from
the others in terms of the degree of submodularity (for example,
sum-welfare can be shown to be completely modular in the number of
active nodes), the curvatures of the plots as a function of payments
are more or less the same. Thus, the concavity (diminishing returns)
appears to be dominated by the submodularity of the activation process.

\noindent {\bf Different payment schemes}.
For a network administrator, it is particularly interesting how the
choice of payments will affect sharing behavior, and the
cost-effectiveness of achieving a certain participation rate. Our next
set of experiments therefore shows the percentage of active nodes, and
the number of active nodes per unit of payment, when the parameters
$\alpha$ and $\beta$ in the payments
$\maxpay[u] = \alpha \cdot \degree{u}^\beta$ are varied.

Figure~\ref{fig:diffscheme} (a) shows the percentage of
active nodes in \todef{Harvard network} with 50\% \dummy nodes,
as a function of $\alpha$ and $\beta$.
Figure~\ref{fig:diffscheme} (b) shows the
number of active nodes per unit of payment under the same setting.
The cost effectiveness is maximized for very small values of $\alpha$
and $\beta$, specifically $\beta = 0$ and $\alpha = 1.2$. However,
this comes at a steep price, in that almost no nodes (only about 4.4\%
of the network) share in this case.

Clearly, there is no single point at which the network should
operate. Rather, a network administrator who wants to achieve a
certain participation rate can use these plots find the most
cost-effective payment scheme to achieve this rate. For instance,
if the goal is to achieve 30\% sharing, this can be achieved by
setting $\alpha=1.6$ and $\beta=1.5$, or $\alpha=1.8$ and $\beta=1$.
Of these, the first scheme spends about 30 units per active node,
while the second scheme spends about 7 units per active node. Thus, a
judicious choice of payments can lead to significant savings while
ensuring the same level of participation.

In general, the plot suggests that $\beta \in [0.5,1]$ tends to lead
to good tradeoffs between participation and cost: for smaller values
of $\beta$, participation tends to be too low, while for higher
values, the cost per active node increases significantly.

%

The observed trends are fairly independent of the network topologies.
In particular, the plots for both the grid and \todef{MIT network}
also suggest that $\beta \in [0.5,1]$ gives the best cost efficiency for
a given fraction of participating nodes.

\begin{figure*}
\begin{center}
\begin{tabular}{cc}
  \psfig{figure=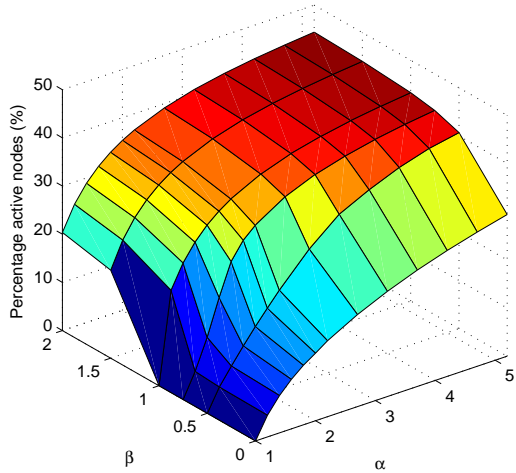,width=3.0in} &
    \psfig{figure=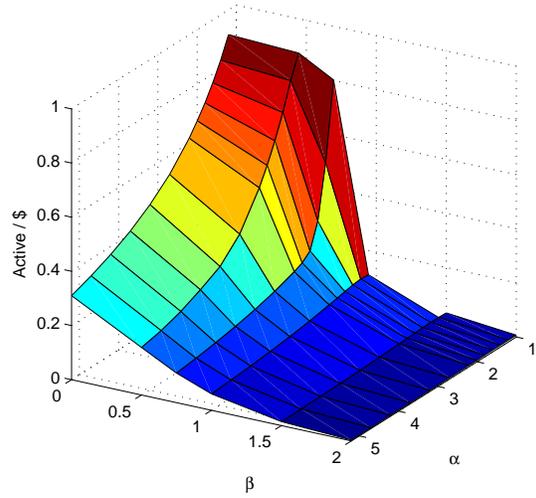,width=3.0in} \\
    (a) Percentage of nodes active &
    (b) Active nodes per unit of payment \\
\end{tabular}
  \caption{Comparison of different payment schemes, \todef{Harvard
      network} with 50\% \dummy nodes.
Notice that for readability, the directions of axis labels
  for $\alpha, \beta$ is different in the two figures.}\label{fig:diffscheme}
\end{center}
\end{figure*}

\noindent {\bf Different thresholds (\threshold)}.
Finally, we investigate the impact of different latency tolerance
thresholds \threshold on the activation process.
Recall that the larger \threshold, the more peers $u$ may serve $v$.
For instance, with $\threshold=2\mbox{ms}$, the average degree of
nodes in the \todef{Harvard network} is 4.58, while with
$\threshold=5\mbox{ms}$, the average degree increases to 14.93.
In the resulting denser graph, we would expect less degree imbalance,
and overall higher network effects; however, the payments will need to
compensate for more downloads from any individual node.

The experiments, conducted on the \todef{Harvard network} with no \dummy nodes,
confirm this intuition.
When $\beta=0$, Figure~\ref{fig:threshold} (a) shows that
the number of nodes serviced is smaller in \harvard[5] than in
\harvard[2]. The reason is that the payments do not increase with
the degree, so it is costlier for nodes in \harvard[5] to become
active.
As $\beta$ increases, and high degrees result in higher compensation,
more nodes are serviced in \harvard[5].
With $\beta > 0$, payments increase in the node degree, and nodes in
\harvard[5] receive more payments because of their higher average
degree. Thus, more nodes are activated, and as a result, more nodes
can be serviced.

The increased activation comes at a price, as seen in Figure
\ref{fig:threshold} (b). The higher average degree in \harvard[5],
combined with the dependence of payments on the degrees, leads to
somewhat higher payments per active (or serviced) node. Thus, in the \demand
model, the increased participation in denser networks is not only a
result of network effects, but also of higher payments.

Therefore, in order to investigate the effectiveness of density itself
on the participation or service rate, we make the following
comparison. Fix the payment per active node for both \harvard[5] and
\harvard[2] to an arbitrary number by choosing the appropriate payment
schemes for each graph. For example, in order to get a payment of 23
per active node, a payment scheme for \harvard[5] would be
$\alpha=2.7$ and $\beta=1$ and for \harvard[2] would be $\alpha=1$ and
$\beta=1.5$. It turns out that the denser network (\harvard[5]) gives
a significantly higher rate for both participation and service. For
a payment of 23 units per active node, for instance, the fraction of
participating nodes for \harvard[5] is 86\% while the same fraction
goes down to 39\% in \harvard[2].

\begin{figure*}
\begin{center}
\begin{tabular}{cc}
  \psfig{figure=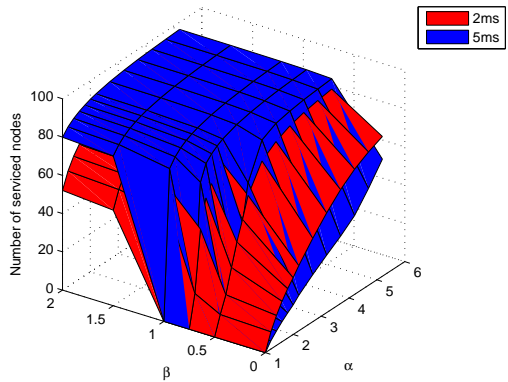,width=3.0in} &
    \psfig{figure=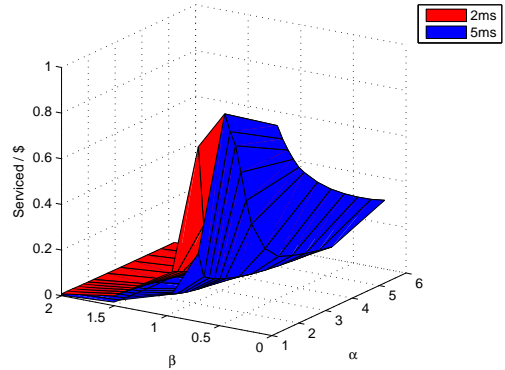,width=3.0in} \\
    (a) Fraction of serviced nodes &
    (b) Serviced nodes per unit of payment \\
\end{tabular}
  \caption{Comparison of different thresholds, \todef{Harvard network}
    with no \dummy nodes.}
\label{fig:threshold}
\end{center}
\end{figure*}

Based on the simulations, the following were our main observations:

\begin{enumerate}
\item How different are the predictions in sharing behavior between
the \demand, the \nonet, and the \onehop models?
Our results show a significant difference between the models in their
prediction of sharing: while the fraction of sharing nodes is
qualitatively similar, the predictions ignoring network effects can be
off by about 15\%--25\%. This results in up to 10\% depreciation in the number of serviced peers.
\item How does the participation depend on the network topology and
  density? 
  We observe that the denser the network,
  the higher the rate of participation, given fixed incentives.
This holds across grid and realistic Internet topologies.  
\item How does the payment scheme affect the number of sharing and
  serviced nodes, and the price paid per node?
  Our experiments suggest that the payments \maxpay[u] for realistic
  topologies should be proportional to $u$'s degree to give high
  overall participation at low cost. In other words, given a network topology, there exists a choice of parameters for payments proportional to node degrees that maximizes the overall ``bang per buck''. We derive these parameters for each network topology experimentally.
\end{enumerate}

\section{Conclusions} \label{sec:conclusions}
There are several natural directions for future work.  A very interesting question arises
when taking payments by ``reputation'' or download
priorities into account. While monetary
payments can (in principle) be increased arbitrarily, reputation is
inherently constant-sum: if some peers are recognized as outstanding
sharers, then others will receive less recognition, and might find the
reduced recognition not enough incentive to keep sharing.
Similarly, download priorities come at the expense of other
  peers, and can thus not be arbitrarily increased for all members of
  the network.
As a result, the process of sharing will not necessarily be
monotone: peers may choose to stop sharing once too many
  other peers are active. A first question is then whether stable
(equilibrium) states even exist.
If so, it would be interesting what fraction of the peers will be
sharing, what the social welfare is, and how these quantities will
depend on the network structure.

From a more practical viewpoint, it would
be desirable to evaluate how accurately our model (or a variation
thereof) captures the actual behavior of participants in a P2P
system. This would likely be a difficult experiment to perform, as
many of the parameters, such as file demands and latency, are
inherently transient, and in a realistic system, payments cannot be
changed constantly to evaluate the impact of such changes.

In the bigger picture, the network designer also has to be
concerned about manipulation by peers. For instance, colluding peers
could artificially inflate the perceived ``degree'' of a peer (by
claiming a download preference), and thus the payments to that peer.
A more thorough investigation of mechanisms taking these and other
concerns into account is an exciting direction for future work.

Finally, our work lies among various applications in economics for
which there are positive or negative externalities among agents in a
neighborhood. Our results suggest that in order to study different
economic metrics such as revenue or social welfare, we should always
consider the cascading effect of agents' strategies over the network.

\bibliographystyle{plain}
\bibliography{names,conferences,publications,bibliography}

\begin{thebibliography}{10}

\bibitem{anagnostakis:greenwald:exchange}
Kostas~G. Anagnostakis and Michael~B. Greenwald.
\newblock Exchange-based incentive mechanisms for peer-to-peer file sharing.
\newblock In {\em Proc.~24th Intl.~Conf.~on Distributed Computing Systems},
  pages 524--533, 2004.

\bibitem{aperjis:freedman:johari:distribution}
Christina Aperjis, Michael~J. Freedman, and Ramesh Johari.
\newblock Peer-assisted content distribution with prices.
\newblock In {\em Proc.~4th Intl.~Conf.~on emerging Networking Experiments and
  Technologies (CONEXT)}, 2008.

\bibitem{bapat:moore-penrose}
Ravindra~B. Bapat.
\newblock {Moore}-{Penrose} inverse of set inclusion matrices.
\newblock {\em Linear Algebra and its Applications}, 318(1):35--44, 2000.

\bibitem{brualdi:ryser}
Richard~A. Brualdi and Herbert~J. Ryser.
\newblock {\em Combinatorial Matrix Theory}.
\newblock Cambridge University Press, 1991.

\bibitem{cheng:friedman:sybilproof}
Alice Cheng and Eric Friedman.
\newblock Sybilproof reputation mechanisms.
\newblock In {\em Proc.~3rd Workshop on the Economics of Peer-to-Peer Systems
  (P2PECON)}, 2005.

\bibitem{cohen:bittorrent}
Bram Cohen.
\newblock Incentives build robustness in bittorrent.
\newblock In {\em Proc.~1st Workshop on Economics of Peer-to-Peer Systems},
  2003.

\bibitem{cornuejols:fisher:nemhauser}
G\'{e}rard Cornu\'{e}jols, Marshall~L. Fisher, and George~L. Nemhauser.
\newblock Location of bank accounts to optimize float.
\newblock {\em Management Science}, 23:789--810, 1977.

\bibitem{feldman:chuang:overcoming}
Michal Feldman and John Chuang.
\newblock Overcoming free-riding behavior in peer-to-peer systems.
\newblock {\em SIGecom Exchanges}, 5(4):41--50, 2005.

\bibitem{feldman:papadimitriou:chuang:stoica}
Michal Feldman, Christos Papadimitriou, John Chuang, and Ion Stoica.
\newblock Free-riding and whitewashing in peer-to-peer systems.
\newblock {\em IEEE Journal on Selected Areas in Communications},
  24(5):1010--1019, 2006.

\bibitem{goldenberg:libai:muller:talk}
Jacob Goldenberg, Barak Libai, and Eitan Muller.
\newblock Talk of the network: A complex systems look at the underlying process
  of word-of-mouth.
\newblock {\em Marketing Letters}, 12:211--223, 2001.

\bibitem{golle:leyton-brown:mironov:lillibridge}
Philippe Golle, Kevin Leyton-Brown, Ilya Mironov, and Mark Lillibridge.
\newblock Incentives for sharing in peer-to-peer networks.
\newblock In {\em Proc.~2nd Intl.~Workshop on Electronic Commerce (WELCOM)},
  pages 75--87, 2001.

\bibitem{granovetter:threshold-models}
Mark Granovetter.
\newblock Threshold models of collective behavior.
\newblock {\em American Journal of Sociology}, 83:1420--1443, 1978.

\bibitem{gummadi:saroiu:gribble:king}
Krishna~P. Gummadi, Stefan Saroiu, and Steven~D. Gribble.
\newblock King: Estimating latency between arbitrary internet end hosts.
\newblock In {\em Proc. 2nd Usenix/ACM SIGCOMM Internet Measurement Workshop
  (IMW)}, 2002.

\bibitem{InfluenceSpread}
David Kempe, Jon Kleinberg, and Eva Tardos.
\newblock Maximizing the spread of influence in a social network.
\newblock In {\em Proc. 9th Intl. Conf. on Knowledge Discovery and Data
  Mining}, pages 137--146, 2003.

\bibitem{InfluenceSpreadICALP}
David Kempe, Jon Kleinberg, and Eva Tardos.
\newblock Influential nodes in a diffusion model for social networks.
\newblock In {\em Proc. 32nd Intl. Colloq. on Automata, Languages and
  Programming}, pages 1127--1138, 2005.

\bibitem{lai:feldman:stoica:chuang}
Kevin Lai, Michal Feldman, Ion Stoica, and John Chuang.
\newblock Incentives for cooperation in peer-to-peer networks.
\newblock In {\em 1st Workshop on Economics of Peer-to-Peer Systems}, 2003.

\bibitem{morris:contagion}
Stephen Morris.
\newblock Contagion.
\newblock {\em Review of Economic Studies}, 67:57--78, 2000.

\bibitem{mossel:roch:submodular}
Elchanan Mossel and Sebastien Roch.
\newblock On the submodularity of influence in social networks.
\newblock In {\em Proc. 38th ACM Symp. on Theory of Computing}, pages 128--134,
  2007.

\bibitem{nemhauser:wolsey:fisher}
George~L. Nemhauser, Laurence~A. Wolsey, and Marshall~L. Fisher.
\newblock An analysis of the approximations for maximizing submodular set
  functions.
\newblock {\em Mathematical Programming}, 14:265--294, 1978.

\bibitem{havardking}
{Network Coordinate Research at Harvard}.

\bibitem{mitking}
{Parallel \& Distributed Operating Systems Group at MIT}.

\bibitem{saroiu:gummadi:gribble:measurement}
Stefan Saroiu, P.~Krishna Gummadi, and Steven~D. Gribble.
\newblock A measurement study of peer-to-peer file sharing systems.
\newblock In {\em Proc.~SPIE/ACM Conf.~on Multimedia Computing and Networking
  (MMCN)}, 2002.

\bibitem{schelling}
Thomas Schelling.
\newblock {\em Micromotives and Macrobehavior}.
\newblock Norton, 1978.

\bibitem{shneidman:parkes:redundancy}
Jeffrey Shneidman and David~C. Parkes.
\newblock Using redundancy to improve robustness of distributed mechanism
  implementations.
\newblock In {\em Proc. 5th ACM Conf. on Electronic Commerce}, pages 276--277,
  2003.

\bibitem{vishnumurthy:chandrakumar:sirer:karma}
Vivek Vishnumurthy, Sangeeth Chandrakumar, and Emin~G\"{u}n Sirer.
\newblock Karma: A secure economic framework for peer-to-peer resource sharing.
\newblock In {\em 1st Workshop on Economics of Peer-to-Peer Systems}, 2003.

\bibitem{vondrak:submodular-welfare}
Jan Vondr\'{a}k.
\newblock Optimal approximation for the submodular welfare problem in the value
  oracle model.
\newblock In {\em Proc. 39th ACM Symp. on Theory of Computing}, pages 67--74,
  2008.

\bibitem{zhao:lui:chiu}
Ben~Q. Zhao, John C.~S. Lui, and Dah-Ming Chiu.
\newblock Analysis of adaptive protocols for p2p networks.
\newblock In {\em Proc. 28th IEEE INFOCOM Conference}, pages 325--333, 2009.

\end{thebibliography}

\begin{appendix}

\section{Hardness of Approximation under the Seed Set Model}

Here, we prove that finding a seed set $S$ to (even approximately)
maximize the eventual number of active nodes is hard under the Sharing
Process.
Let \textsc{Best Seed} be the optimization problem of finding the seed set $S$ of at
most $k$ nodes that maximizes the total number of sharing nodes, given
$n$ servers $u_1, \dots, u_n$ and the corresponding parameters 
$\cost[u], \de[u], \paycoef[u], \pay[u], \p[v]{u}$.
(Notice that when all of the \paycoef[u] are given, the process is
deterministic.)

\begin{proposition} \label{prop:hardness}
It is hard to approximate \textsc{Best Seed} within $n^{1-\epsilon}$
for any $\epsilon > 0$ unless P = NP.
\end{proposition}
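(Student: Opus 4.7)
The plan is to reduce from \textsc{Max Clique}, which is $N^{1-\epsilon}$-inapproximable for every $\epsilon > 0$ (Håstad--Zuckerman), in a way that preserves this gap inside the Seed Set Model. The first step is to show that, in the Seed Set Model, the demand primitives can simulate an arbitrary ``at-least-$t$-of-$m$'' threshold gate. Given inputs $x_1, \dots, x_m$ and a threshold $t$, introduce an auxiliary node $u$ together with $m$ demand nodes $d_1, \dots, d_m$ satisfying $\p[d_i]{u} = \p[d_i]{x_i} = 1$, $\de[d_i] = 1$, and $\p[d_i]{w} = 0$ for $w \notin \{u, x_i\}$. Then a short calculation shows $u$'s cost is $\cost[u](m - j/2)$ when $j$ of the $x_i$ are in \Vp, so tuning $\pay[u]$ to lie strictly between $\cost[u](m-(t-1)/2)$ and $\cost[u](m-t/2)$ makes $u$ fire in the cascade iff at least $t$ of the $x_i$ are active. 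Specializing to $t=m$ and $t=1$ yields AND and OR gates, so any polynomial-size monotone Boolean circuit can be embedded inside the model.

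The second step embeds a graph $G = (V, E)$ with $|V| = N$ into a P2P instance with $n = \mathrm{poly}(N)$ nodes, gap-preservingly. Include (i) a ``vertex node'' $V_v$ for each $v \in V$, which is the only natural seed candidate; (ii) for each $v$, a gadget $B_v$ that fires iff $V_v \in \Vp$ and at least $k-1$ of $\{V_u : u \in N_G(v)\}$ are in \Vp, implemented via the threshold gadget of Step~1; (iii) a layered ``\textsc{Win}'' subcircuit whose inputs are the $B_v$ gadgets and whose output fires only when $k$ of them fire in a configuration consistent with a $k$-clique witness; and (iv) a long cascade of $M = N^{\Theta(1/\epsilon)}$ reward nodes downstream of \textsc{Win}, arranged so that firing \textsc{Win} releases all $M$ of them. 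If $G$ contains a clique $C$ of size $k$, then seeding $\{V_v : v \in C\}$ activates every $B_v$ for $v \in C$, hence \textsc{Win}, hence the full cascade of $M$ rewards, giving OPT $\geq M$. If $\omega(G) < k$, no size-$k$ seeding via vertex nodes fires \textsc{Win}, and choosing $M \gg \mathrm{poly}(N)$ translates the Håstad--Zuckerman gap into $n^{1-\epsilon'}$ hardness for \textsc{Best Seed}.

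The \emph{main obstacle} is that the cascade is monotone, so the adversary in the NO case may try to ``shortcut-seed'' \textsc{Win} or a reward node directly, which would fire the entire cascade for free and collapse the gap. Controlling this is what forces \textsc{Win} to be a carefully layered subcircuit rather than a single node. The cleanest approach I see is to replace \textsc{Win} by $k$ independent ``permuted copies'' of the clique-check, AND-ed together and gated so that each copy requires a distinct $B_v$ to fire through its vertex-node input; together with a pool of identical reward-activation ``trigger'' nodes of size $> k$, this makes shortcut-seeding strictly wasteful of seed budget. The technical heart of the proof is then a combinatorial bound showing that in the NO case, any seed set of $k$ nodes (of \emph{any} type) activates only $\mathrm{poly}(N) \ll M$ nodes, which in turn boils down to checking that no set of $k$ gadget nodes can simultaneously shortcut all $k$ permuted copies of the clique-check.
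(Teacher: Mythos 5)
Your threshold gadget is correct and is the strongest part of the proposal: with $\de[d_i]=1$ and $\p[d_i]{u}=\p[d_i]{x_i}=1$, node $u$ indeed faces cost $\cost[u](m-j/2)$ when $j$ of the $m$ inputs are active, so a deterministic choice of the payment turns $u$ into an exact ``$\geq t$-of-$m$'' gate, and monotone circuit simulation follows. This is a genuinely different route from the paper, which reduces from \textsc{Vertex Cover} via a direct four-layer construction (one node per vertex, a ``primary'' and a ``secondary'' node per edge, and $M^r$ ``bulk'' reward nodes): there the demand-splitting itself encodes ``every edge is covered,'' and each bulk node is gated on \emph{all} $M$ secondary demands being halved simultaneously --- which is precisely the wide bottleneck you are trying to engineer by hand. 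The paper's construction is much leaner and needs no circuit machinery; your simulation lemma is more general and would let one encode other problems, but it also creates many more seedable internal nodes, which is where the difficulties concentrate.

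The proposal as written has concrete gaps. (1) The appeal to H{\aa}stad--Zuckerman is a red herring: your $n^{1-\epsilon}$ gap is manufactured entirely by the reward pool of size $M\gg\mathrm{poly}(N)$, so plain NP-hardness of $k$-\textsc{Clique} suffices and no inapproximability of \textsc{Max Clique} is (or could be) used, since you only test a single threshold $k$. (2) The anti-shortcut design fails as stated: with $k$ ``permuted copies'' AND-ed together, the adversary seeds the $k$ copy-output nodes and fires the AND, hence the entire reward pool. You need at least $k+1$ replicas of \emph{every} layer strictly between the vertex nodes and the rewards, including the final AND and the trigger; and the $M$ reward nodes must sit in parallel, each individually gated on all $k+1$ AND-replicas --- a reward ``chain'' can be shortcut by seeding its first link. (3) The one layer that cannot be replicated away is the $B_v$ layer, since only $k$ of the $B_v$ fire even in the YES case; here you must actually prove the lemma you defer to, namely that if $S$ is the set of active vertex nodes, $|S|\le k$, and $G$ has no $k$-clique, then at most $k-1$ of the $B_v$ fire honestly (because $\{v\in S: S\setminus\{v\}\subseteq N_G(v)\}$ has size $k$ only when $S$ is a $k$-clique), so that completing any one threshold-$k$ copy consumes at least one non-vertex seed and $k+1$ copies cannot all be completed with $k$ seeds. (4) You must account for loading: an input $x_i$ of a gate also serves demand from that gate's $d_i$ nodes, which raises $x_i$'s own activation cost by an amount depending on its fan-out, so payments must be set with the exact fan-out in mind. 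None of these is unfixable, but the step you yourself identify as the ``technical heart'' is exactly the step that is missing, so the argument is not complete as written.
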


\begin{proof}
We reduce from the \textsc{Vertex Cover} problem. 
Recall that the \textsc{Vertex Cover} problem is formulated as
follows: Given a graph $G=(V,E)$, a vertex cover is a set
$S \subseteq V$ of nodes such that each edge $e \in E$ has at least
one endpoint in $S$. In the \textsc{Vertex Cover} decision problem,
the input is a pair $(G,k)$: the question is
whether there is a vertex cover of size at most $k$. 
We assume without loss of generality that $G$ contains no isolated
vertices.

Given an arbitrary \textsc{Vertex Cover} instance with 
$N = \SetCard{V}$ nodes and $M = \SetCard{E} \geq N/2$ edges,
we construct an instance of \textsc{Best Seed} as follows:
For each node $u \in V$, we have a node $w_u$. 
For each edge $e \in E$, we create two nodes $x_e, x'_e$.
Finally, setting $r=1/\epsilon$, we create $M^r$ ``bulk'' nodes
$y_1, \ldots, y_{M^r}$. 
We set $\p[x'_e]{y_i} = 1$ for all $y_i, x_e$. 
For all $e$, $\p[x'_e]{x_e} = 1$. 
Finally, whenever $e$ is incident on $u$, we have $\p[x_e]{w_u} = 1$.
All other values of $p$ are 0.

We visualize the construction above in 4 layers. 
The ``node layer'' consists of all nodes $w_u$ for all $u$. 
The ``primary layer'' consists of all $x_e$. 
The ``secondary layer'' consists of all $x'_e$. 
Finally, the ``bulk layer'' consists of all $y_i$.
Next, we define payments and demands:

\begin{eqnarray*}
\label{setcoverpayments}
\pay[v] = \left\{ \begin{array}{rl}
 0 &\mbox{ if $v = w_u$ for some $u \in V$ (node layer)} \\
  3.5 &\mbox{ if $v = x_{e}$ for some $e \in E$ (primary layer)}\\
  0 &\mbox{ if $v = x'_{e}$ for some $e \in E$ (secondary layer)}\\
  M + 0.5 &\mbox{ otherwise (bulk layer)}
       \end{array} \right.
\end{eqnarray*}

\begin{eqnarray*}
\label{setcoverdemands}
\de[v] = \left\{ \begin{array}{rl}
 0 &\mbox{ if $v = w_u$ for some $u \in V$ (node layer)} \\
  2 &\mbox{ if $v = x_{e}$ for some $e \in E$ (primary layer)}\\
  2 &\mbox{ if $v = x'_{e}$ for some $e \in E$ (secondary layer)}\\
  0 &\mbox{ otherwise (bulk layer)}
       \end{array} \right.
\end{eqnarray*}

First, let $T$ be a vertex cover of size at most $k$. Consider the effect
of starting with the nodes $w_u, u \in T$ as a seed set. 
Because $T$ is a vertex cover, each primary node $x_e$ now has
an active node $w_u$ with $\p[x_e]{w_u} = 1$, so that its demand of
2 is split between itself and (at least) one node $w_u$. Thus, upon
activation, it would face at most a demand of $2$ from $x'_e$ and 1
from itself, whereas its payment is $3.5$. Hence, each
primary node will become active in the second round.
Once the primary node $x_e$ is active, $x'_e$ will split its demand
evenly between $x_e$ and all active bulk nodes. Hence, each bulk node
$y_i$ will see demand at most $1$ from each $x'_e$, for a total of
$M$. Since its payment offer is larger, $y_i$ will become
active. Hence, all bulk nodes will be active by round 3, and the total
number of active nodes is at least $M^r+M+k$.

Conversely, suppose that strictly more than $M+N$
nodes are active. Because none of the secondary nodes ever become
active (since they have a payment offer of 0), this means that at
least one bulk node must be active. Let $y_i$ be the first bulk node
to become active, breaking ties arbitrarily. Because no other bulk
nodes are active at this time, $y_i$ must see demand at least $1$ from each
secondary node $x'_e$. And because its payment offer
is only $M+0.5$, this means that it cannot see demand $2$ from
any secondary node --- otherwise, the total demand would exceed the
payment. This means that for each secondary node $x'_e$, the
corresponding primary node $x_e$ must already be active. 
Without loss of generality, the seed set contained no primary nodes
--- otherwise, the node $x_e$ could be replaced by $w_u$ (where $u$ is
an endpoint of $e$), which would next activate $x_e$. 
Thus, $x_e$ must have become activated at some point of the process,
which can only happen when its total demand is smaller than its
payment. Since at that point, only $x_e$ can serve the demand of
$x'_e$, this in turn means that $x_e$'s own demand must be split between
itself and one or more active nodes $w_u$. Thus, if $S$ is the set of
initially active nodes in the node layer, then the corresponding
vertices of $G$ must form a vertex cover.

In summary, if there is a vertex cover of size at most $k$, then there is
a seed set of size at most $k$ activating at least $M^r+M+k$ nodes,
whereas otherwise, no seed set of size at most $k$ can activate more
than $M+N \leq 3M$ nodes. 
Thus, no approximation better than $\Omega(M^{r-1})$ is possible.
Since the total number of nodes is $n=M^r+2M+N \leq 2M^r$ (for $r$
large enough), 
this proves an approximation hardness of $\Omega(n^{1-1/r}) =
\Omega(n^{1-\epsilon})$, unless P=NP.
\end{proof}

\end{appendix}
\end{document}